\newtheorem{theorem}{Theorem}
\newtheorem{lemma}[theorem]{Lemma}
\newtheorem{corollary}[theorem]{Corollary}
\newcommand{\LCE}{\mathsf{LCE}}
\newcommand{\ShortLCE}{\mathsf{ShortLCE}}
\newcommand{\LongLCE}{\mathsf{LongLCE}}
\newcommand{\TST}[2]{{#2}\mathchar`-\mathsf{TST}({#1})}
\newcommand{\Str}{\mathit{str}}
\newcommand{\Substr}{\mathit{Substr}}
\newcommand{\LZ}{\mathsf{LZ}}
\newcommand{\locc}{\mathit{locc}}
\newcommand{\code}{\mathit{code}}
\newcommand{\polylog}{\mathrm{polylog}}
\title{
  Small-space encoding LCE data structure\\ with constant-time queries
}
\date{}
\author{Yuka Tanimura\quad
        Takaaki Nishimoto\quad
        Hideo Bannai\quad\\
        Shunsuke~Inenaga\quad
        Masayuki Takeda\\
        {Department of Informatics, Kyushu University, Japan}\\
        {\texttt{\{yuka.tanimura,takaaki.nishimoto,bannai,inenaga,takeda\}@inf.kyushu-u.ac.jp}}
}
\begin{document}

\maketitle
\begin{abstract}
  The \emph{longest common extension} (\emph{LCE}) problem is
  to preprocess a given string $w$ of length $n$ so that
  the length of the longest common prefix between suffixes of $w$ that
  start at any two given positions is answered quickly.
  In this paper, we present a data structure of
  $O(z \tau^2 + \frac{n}{\tau})$ words of space
  which answers LCE queries in $O(1)$ time
  and can be built in $O(n \log \sigma)$ time,
  where $1 \leq \tau \leq \sqrt{n}$ is a parameter,
  $z$ is the size of the Lempel-Ziv 77 factorization of $w$
  and $\sigma$ is the alphabet size.
  This is an \emph{encoding} data structure,
  i.e., it does not access the input string $w$ when answering queries
  and thus $w$ can be deleted after preprocessing.
  On top of this main result,
  we obtain further results using (variants of) our LCE data structure,
  which include the following:
  \begin{itemize}
  \item
    For highly repetitive strings where the $z\tau^2$ term
    is dominated by $\frac{n}{\tau}$,
    we obtain a \emph{constant-time and sub-linear space}
    LCE query data structure.
 
  \item
    Even when the input string is not well compressible
    via Lempel-Ziv 77 factorization, 
    we still can obtain a \emph{constant-time and sub-linear space}
    LCE data structure for suitable $\tau$ and for $\sigma \leq 2^{o(\log n)}$.
  
  \item
    The time-space trade-off lower bounds for the LCE problem
    by Bille et al.~[J. Discrete Algorithms, 25:42-50, 2014]
    and by Kosolobov~[CoRR, abs/1611.02891, 2016]
    can be ``surpassed'' in some cases
    with our LCE data structure.
    
  \end{itemize}
\end{abstract}

\section{Introduction}

\subsection{The LCE problem}
The \emph{longest common extension} (\emph{LCE}) problem is
to preprocess a given string $w$ of length $n$ so that
the length of the longest common prefix of
suffixes of $w$ starting at two query positions is answered quickly.
The LCE problem often appears as a sub-problem of
many different string processing problems,
e.g.,
approximate pattern matching~\cite{LandauV86,GalilG86},
string comparison~\cite{LandauMS98},
and finding string regularities such as
maximal repetitions (a.k.a. runs)~\cite{KolpakovK99,BannaiIINTT15},
distinct squares~\cite{GusfieldS04,BannaiIK16},
gapped repeats~\cite{BrodalLPS99,KolpakovK00,GawrychowskiIIK16,CrochemoreKK16},
palindromes and gapped palindromes~\cite{Gusfield97,KolpakovK09,NarisadaDNIS17},
and 2D palindromes~\cite{GeizhalsS16}.

A well known solution to the LCE problem is achieved by the suffix tree~\cite{Weiner73} augmented with
a constant-time linear-space longest common ancestor (LCA) data structure~\cite{bender00:_lca_probl_revis},
or equivalently the inverse suffix array (ISA)
and longest common prefix (LCP) array
augmented with a constant-time range minimum query (RMQ) data structure~\cite{manber93:_suffix_array,bender00:_lca_probl_revis}.
Either combination uses $O(n)$ words of space,
answer LCE queries in $O(1)$ time,
and can be constructed using $O(n)$ words of working space,
in $O(n)$ time for integer alphabets or in $O(n \log \sigma)$ time
for general ordered alphabets of size $\sigma$.
The $O(n)$ space requirements, however, can be prohibitive for
massive text, and hence the main focus of recent research has been
on more space-efficient solutions with trade-offs for query time.

\subsection{Space-efficient LCE data structures: Indexing or encoding}
In this paper, we will call data structures that use $o(n)$ words, or equivalently $o(n\log n)$ bits as \emph{sub-linear space} data structures.
Bille et al.~\cite{bille14:_time}
proposed the first \emph{sub-linear} space LCE query data structure
which occupies $O(\frac{n}{\tau})$ words of space,
answers LCE queries in $O(\tau^2)$ time,
and can be built in $O(\frac{n^2}{\tau})$ time using
$O(\frac{n}{\tau})$ words of working space,
for parameter range $1 \leq \tau \leq \sqrt{n}$.
Bille et al.~\cite{bille15:_longes_common_exten_sublin_space}
developed an improved sub-linear space data structure
which occupies $O(\frac{n}{\tau})$ words of space,
answers LCE queries in $O(\tau)$ time,
and can be built in $O(n^{\frac{3}{2}})$ \emph{expected} time
using $O(\frac{n}{\tau})$ words of working space,
or in $O(n^{2+\epsilon})$ time using $O(\frac{n}{\tau})$ words of working space
for parameter $1 \leq \tau \leq n$,
where $0 < \epsilon < 1$.
Tanimura et al.~\cite{TanimuraIBIPT16} proposed
an LCE data structure of $O(\frac{n}{\tau})$ words of space,
which can be built in faster $O(n \tau)$ time
using $O(\frac{n}{\tau})$ words of working space,
but takes slower $O(\tau \log \min\{\tau,\frac{n}{\tau}\})$ time for LCE queries,
for parameter $1 \leq \tau \leq n$.
All of these sub-linear space LCE data structures are \emph{indexing} data structures~\cite{BrodalDR12}, that is,
access to the input string is required to answer queries.
Therefore, these data structures require extra
$n\lceil \log \sigma \rceil$ \emph{bits} of space for
storing the input string.
A space-efficient indexing LCE data structure based on fingerprints
is also proposed~\cite{Prezza16}.

There also exist \emph{compressed} LCE data structures
which store a compressed form of the input string
represented as a straight-line program (a.k.a. grammar-based text compression)~\cite{DBLP:conf/mfcs/NishimotoIIBT16,Inenaga15,I16}.
Unlike the afore-mentioned indexing LCE data structures,
these methods do not need to keep the original uncompressed input string.
In this sense, they can be seen as \emph{encoding} data structures~\cite{BrodalDR12} for the LCE problem.
For compressible strings, the space usage of these data structures can be sub-linear.

\begin{table}[t]
    \caption{Deterministic LCE query data structures. $n$ is the length of the input string, $\sigma$ is the alphabet size, $z$ is the size of the Lempel-Ziv 77 factorization of $w$, $l$ is the length of the LCE, $\omega$ is the machine word size, $\epsilon > 0$ is an arbitrarily small constant, and $\tau$ is a trade-off parameter ($\dagger : 1 \leq \tau \leq n$,
	 $\diamond : 1 \leq \tau \leq \sqrt{n}$). ISA+ consists of the inverse suffix array of $w$, the LCP array and the RMQ data structure. $\star$ is valid for $\omega = \Theta(\log n)$ and $\sigma \leq 2^{o(\log n)}$.}
	\label{table:resultcomparison}    
	\tabcolsep = 1mm
           \centerline{
	\begin{tabular}{|l|l|l|l|p{1.23cm}|}\hline
		\multicolumn{2}{|c|}{Data structure} & \multicolumn{2}{|c|}{Preprocessing} & \multirow{2}{*}{Ref} \\\cline{1-4}
		Space (bits) & Query Time & Working space & Construction time &\\\hline
		$O(\omega)$ & $O(n)$ & $O(\omega)$ & -  & na\"ive\\ \hline
		$O(n\omega)$ & $O(1)$ & $O(n\omega)$ & $O(n)$  & ISA+\\ \hline
		$n \lceil \log \sigma \rceil + O(\frac{n\omega}{\tau})$ & $O(\tau^2)$ & $O(\frac{n\omega}{\tau})$ & $O(n^2/\tau)$ & $\diamond$ \cite{bille14:_time}\\\hline
		$n \lceil \log \sigma \rceil + O(\frac{n\omega}{\tau})$ & $O(\tau)$ & $O(\frac{n\omega}{\tau})$ & $O(n^{3/2})$ exp. & 
		$\dagger$ \cite{bille15:_longes_common_exten_sublin_space}(1)\\\hline
		$n \lceil \log \sigma \rceil + O(\frac{n\omega}{\tau})$ & $O(\tau)$ & $O(\frac{n\omega}{\tau})$ & $O(n^{2+\epsilon})$ & 
		$\dagger$ \cite{bille15:_longes_common_exten_sublin_space}(2)\\
		\hline
		$n\lceil \log \sigma \rceil + O(\frac{n\omega}{\tau})$ & $O(\tau\log\min\{\tau,\frac{n}{\tau}\})$ & $O(\frac{n\omega}{\tau})$ & $O(n\tau)$  & $\dagger$ \cite{TanimuraIBIPT16}\\
		\hline		
		$n\lceil \log \sigma \rceil + O(\omega\log n)$ & $O(\log l)$ & $O(\omega\log n)$ & $O(n\log n)$ exp.  & \cite{Prezza16}\\
		\hline
		$O(z \omega\log n \log^* n)$ & $O(\log n \log^* n)$ & $O(z \omega\log n \log^* n)$ & $O(n \log \sigma)$  & \cite{DBLP:conf/mfcs/NishimotoIIBT16},\cite{Inenaga15}\\
		\hline
		$O(z \omega\log \frac{n}{z})$ & $O(\log n)$ & $O(n\omega)$ & $O(n)$  & \cite{I16}\\
		\hline
		$O(z \omega\log \frac{n}{z})$ & $O(\log n)$ & $O(n\log\sigma + z\omega \log \frac{n}{z})$ & $O(n\log\log\sigma + z\log^2\frac{n}{z})$ & \cite{I16}+\cite{KopplS15} \\
		\hline
		$O((z \tau^2 + \frac{n}{\tau})\omega)$ & $O(1)$ & $O((z \tau^2 + \frac{n}{\tau})\omega)$ & $O(n \log \sigma)$  & $\diamond$ ours\\
                \hline
		$O(z^{1/3}n^{2/3}\omega)$ & $O(1)$ & $O(z^{1/3}n^{2/3}\omega)$ & $O(n \log \sigma\log n)$  & ours\\
                \hline
                      $o(n \log n)$ & $O(1)$ & $o(n \log n)$ & $o(n \log^2 n)$ & $\star$ ours\\
		\hline
		$O(\sqrt{nz}\omega)$ & $O(\sqrt{\frac{n}{z}})$ & 
		$O(\sqrt{nz} \omega)$ & $O(n \log \sigma \log n)$  & ours\\
		\hline
	\end{tabular}
           }	
\end{table}

\subsection{Our LCE data structure: Constant-time queries, sub-linear space, and encoding}

This paper proposes the \emph{first $O(1)$-time} LCE data structure
which takes \emph{sub-linear space} in several reasonable cases, namely, when the
string is compressible, and/or, when the alphabet size is suitably small.
Our data structure has both flavours of
sub-linear space and compressed LCE data structures.
Namely, for parameter $1 \leq \tau \leq \sqrt{n}$,
we present an LCE data structure which takes $O(z\tau^2 + \frac{n}{\tau})$
words of space,
answers LCE queries in $O(1)$ time,
and can be built in $O(n \log \sigma)$ time for
general ordered alphabets of size $\sigma$
using $O(z\tau^2 + \frac{n}{\tau})$ words of working space,
where $z$ is the size of the Lempel-Ziv~77 factorization~\cite{LZ77}
of the input string.
It is known that $z$ is a lower bound of the size of \emph{any} grammar-based compression of the string~\cite{Rytter03}, 
and can be very small for highly repetitive strings. In such cases where the $z\tau^2$ term is dominated by $\frac{n}{\tau}$, 
our LCE data structure uses sub-linear space.
An interesting feature is that we do \emph{not} actually compress the input string,
i.e., do not compute the Lempel-Ziv~77 factorization,
but we construct a data structure
whose size is bounded by $O(z\tau^2 + \frac{n}{\tau})$.

Even when the input string is not well compressible via Lempel-Ziv 77,
for suitably small alphabets, we can build a sub-linear space LCE data structure 
with $O(1)$ query time using appropriate values of $\tau$.
By choosing $\tau = (\frac{n}{z})^{\frac{1}{3}}$,
our LCE query data structure takes
$O(z^{\frac{1}{3}} n^{\frac{2}{3}})$ words of space,
which translates to
$O(n / (\log_\sigma n)^{\frac{1}{3}})$
using the well-known fact that $z = O(n / \log_\sigma n)$.
This means that our data structure can be stored in
$O(n \log n / (\log_\sigma n)^{\frac{1}{3}}) = O(n (\log n)^{\frac{2}{3}}(\log \sigma)^{\frac{1}{3}})$ \emph{bits} of space.
This implies that for alphabets of size $\sigma \leq 2^{o(\log n)}$
(note that these contain polylogarithmic alphabets),
our data structure takes only $o(n \log n)$ bits of space,
yet answers LCE queries in $O(1)$ time.
Also, our LCE data structure does not access the input string
when answering queries,
and hence the input string does not have to be kept.
To our knowledge, this is the first sub-linear space encoding
LCE data structure for strings incompressible with Lempel-Ziv 77.

The key to our efficient LCE query data structure is a hybrid use
of the \emph{truncated suffix trees}~\cite{NaAIP03:TST}
and block-wise LCE queries based on $t$-covers~\cite{PuglisiT08,bille14:_time}.
The $q$-truncated suffix tree of a string $w$
is the compact trie (a.k.a. Patricia tree) which represents
all substrings of $w$ of length at most $q$.
We observe that,
for any $1 \leq q \leq n$,
the $q$-truncated suffix tree can be stored in $O(zq)$ words of space,
including a string to which the edges label pointers refer.
We also show that the block-wise LCE query data structure
based on $t$-covers can be efficiently built by
the $t$-truncated suffix tree, leading to our result.
Several variants of our data structure are considered, as summarized in Table~\ref{table:resultcomparison}.

The rest of this paper is organized as follows.
Section~\ref{sec:preliminaries} gives some definitions  and
introduces tools which will be used as building-blocks of our LCE data structure.
In Section~\ref{sec:LCE_data_structure}
we propose our new LCE data structure and analyze its time/space complexities.
In Section~\ref{sec:lowerbound} we review some lower bounds
on the LCE problem and show that using our LCE data structure,
these lower bounds can be ``surpassed'' in some cases.
We conclude in Section~\ref{sec:conclusions}.

\section{Preliminaries} \label{sec:preliminaries}

\subsection{Notations}

Let $\Sigma$ be an ordered alphabet of size $\sigma$.
Each element of $\Sigma^*$ is called a \emph{string}.
The length of a string $w$ is denoted by $|w|$.
The empty string $\varepsilon$ is the string of length zero,
namely $|\varepsilon| = 0$.
If $w = xyz$ for some strings $w,x,y,z$, then $x$, $y$, and $z$ are respectively
called a {\em prefix}, {\em substring}, and {\em suffix} of $w$.
For any $1 \leq i \leq |w|$, let $w[i]$ denote the $i$th character of $w$.
For any $1 \leq i \leq j \leq |w|$, let $w[i..j]$ denote
the substring of $w$ that begins at position $i$ and ends at position $j$,
namely, $w[i..j] = w[i] \cdots w[j]$.
A string of length $q$ is called a \emph{$q$-gram}.
For any $1 \leq q \leq |w|$,
let $\Substr_q(w)$ denote the set of all $q$-grams occurring in $w$
and the $q-1$ suffixes of $w$ of length shorter than $q$,
namely, $\Substr_q(w) = \{w[i..\min\{i+q-1, |w|\}] \mid 1 \leq i \leq |w|\}$.

For any string $w$, let $\LCE^w(i,j)$ denote the length of the longest common
prefix of $w[i..|w|]$ and $w[j..|w|]$.
We will write $\LCE(i,j)$ when $w$ is clear from the context.
Since $\LCE^w(i,i) = |w|-i$, we will only consider the case when $i\neq j$.
For any integers $i\leq j$, let $[i..j]$ denote the set of integers
from $i$ to $j$ (including $i$ and $j$).

The \emph{Lempel-Ziv 77 factorization with self-references}~\cite{LZ77} 
of a string $w$ is a sequence $\LZ(w) = f_1, \ldots, f_z$ of
$z$ non-empty substrings of $w$ such that $w = f_1 \cdots f_z$
and for $1 \leq i \leq z$,
\begin{itemize}
  \item $f_{i} = w[|f_1 \cdots f_{i-1}| + 1] \in \Sigma$ if $s[|f_1 \cdots f_{i-1}| + 1]$ is a character not occurring in $f_0 \cdots f_{i-1}$,
  \item $f_{i}$ is the longest prefix of $f_{i} \cdots f_{z}$ such that $f_{i}$ is a substring of $w$ beginning at a position in range $[1..|f_1 \cdots f_{i-1}|]$,
\end{itemize}
where $f_{0} = \varepsilon$.
The \emph{size} of $\LZ(w)$ is the number $z$ of factors $f_1, \ldots, f_z$,
and is denoted as $|\LZ(w)| = z$.
For instance, for string $w = \mathtt{abababcabababcabababcd}$ of length $22$,
$\LZ(w) = \mathtt{a,b,abab,c,abababcabababc,d}$ and $|\LZ(w)| = 6$.

Our model of computation is a standard word RAM with machine word size $\omega \geq \log n$.
The space requirements will be evaluated by the number of words unless otherwise stated.

\subsection{Tools}

We will use the following tools
as building blocks of our LCE data structure.

\vspace*{1pc}
\noindent \textbf{$t$-covers.}
For any positive integer $t$,
a set $D \subseteq [0..t-1]$ is called a \emph{$t$-difference-cover}
if $[0..t-1] = \{(x-y) \bmod t \mid x, y \in D\}$,
namely, every element in $[0..t-1]$ can be expressed by
a difference between two elements in $D$ modulo $t$.
For any positive integer $n$,
a set $S \subseteq [1..n]$ is called a \emph{$t$-cover} of $[1..n]$
if $S = \{i \in [1..n] \mid (i \bmod t) \in D\}$
with some $t$-difference-cover $D$,
and there is a constant-time computable 
function $h(i, j)$ that for any $1 \le i, j \le n-t$,
$0 \le h(i, j) \le t$ and $i + h(i, j), j + h(i, j) \in S$.

\begin{lemma}[\cite{Maekawa85}]
  For any integer $t$, there exists a $t$-difference-cover
  of size $O(\sqrt{t})$ which $D(t)$ can be computed in $O(\sqrt{t})$ time.
\end{lemma}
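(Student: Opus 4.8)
The plan is to replace the pure existence claim with an \emph{explicit} construction, which simultaneously delivers the $O(\sqrt t)$ cardinality bound and the $O(\sqrt t)$ construction time. Let $m = \ceil{\sqrt{t}}$, so that $(m-1)^2 < t \le m^2$, and take
$$
D(t) = \bigl(\{0, 1, \ldots, m-1\} \cup \{m, 2m, \ldots, (m-1)m\}\bigr) \bmod t,
$$
i.e. the set of residues modulo $t$ of the displayed integers. First I would observe that this set has at most $2m-1 = O(\sqrt t)$ elements, lies in $[0..t-1]$ by construction, and can be enumerated in $O(\sqrt t)$ time by writing down the two arithmetic progressions and reducing each term modulo $t$. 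So the only real content is to verify that $D(t)$ is a $t$-difference-cover.

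For that, I would take an arbitrary $k \in [0..t-1]$ and write it by division as $k = qm + r$ with $0 \le r \le m-1$; since $k \le t-1 \le m^2 - 1$, the quotient satisfies $0 \le q \le m-1$. The idea is to realize $k$ as a difference of one ``coarse'' element (a multiple of $m$) and one ``fine'' element (an element of $\{0,\ldots,m-1\}$). When $r = 0$ this is immediate, as $k = qm$ is itself the residue of a generator of $D(t)$, so $k \equiv (qm \bmod t) - 0$. When $r \ge 1$ and additionally $q \le m-2$, I would use the identity $k = (q+1)m - (m-r)$: here $(q+1)m$ is one of the progression terms generating $D(t)$ and $m-r \in \{1,\ldots,m-1\} \subseteq D(t)$, and reducing the whole difference modulo $t$ gives $\bigl((q+1)m \bmod t\bigr) - (m-r) \equiv k \pmod t$, which equals $k$ since $k \in [0..t-1]$.

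The step I expect to be the only genuine obstacle is the boundary case $q = m-1$ with $r \ge 1$, because there the natural coarse partner $(q+1)m = m^2$ is no longer among the generators (the progression stops at $(m-1)m$) and may even exceed $t-1$. This is precisely where working modulo $t$ is essential. The plan is to pass to the complementary residue: set $k' = t-k$ and use $t \le m^2$ together with $r \ge 1$ to show $1 \le k' \le m-r \le m-1$, so that $k' \in \{1,\ldots,m-1\} \subseteq D(t)$; then $k \equiv (0 - k') \pmod t$ expresses $k$ as a difference of two elements of $D(t)$. Having covered $r=0$, the interior case $r \ge 1,\ q \le m-2$, and this boundary case, every $k \in [0..t-1]$ is obtained, establishing $[0..t-1] = \{(x-y) \bmod t \mid x, y \in D(t)\}$ and hence the lemma. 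One could alternatively invoke the quorum construction of \cite{Maekawa85} directly, but the elementary two-progression argument above is self-contained and already meets both the size and time bounds.
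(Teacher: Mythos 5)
Your proposal is correct, and its relationship to the paper is worth noting: the paper gives \emph{no} proof of this lemma at all --- it is imported wholesale as a citation to Maekawa's work, where the difference cover arises from a finite-projective-plane (quorum) construction. Your two-progression construction $D(t) = \bigl(\{0,\ldots,m-1\} \cup \{m, 2m, \ldots, (m-1)m\}\bigr) \bmod t$ with $m = \lceil\sqrt{t}\rceil$ is a genuinely different and more elementary route, and your case analysis is complete and sound: the case $r=0$ is immediate since then $k=qm < t$ lies in $D(t)$ itself; the interior case $r\ge 1,\ q\le m-2$ correctly uses that $x = (q+1)m \bmod t$ and $y = m-r$ both lie in $D(t)$ and $(x-y)\bmod t = k$; and you correctly identified the one delicate point, the boundary case $q = m-1,\ r\ge 1$, where the estimate $k' = t-k \le m^2 - (m-1)m - r = m - r \le m-1$ (using $t \le m^2$ and $r \ge 1$) places $k'$ in the fine set, so $k = (0 - k') \bmod t$. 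What the two approaches buy is a constant-factor trade-off: Maekawa's projective-plane covers have size roughly $\sqrt{t}$ (essentially optimal, though naturally stated for $t$ of special form such as $q^2+q+1$), while yours has size at most $2\lceil\sqrt{t}\rceil - 1$; since the paper only needs $O(\sqrt{t})$ size and $O(\sqrt{t})$ construction time, your self-contained argument fully suffices and makes the paper's toolbox independent of the external reference.
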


\begin{lemma}[\cite{BurkhardtK03}] \label{lem:t-cover}
  For any integer $t~(\le n)$, there exists a $t$-cover of size
  $O(\frac{n}{\sqrt{t}})$ which can be computed in $O(\frac{n}{\sqrt{t}})$ time.
\end{lemma}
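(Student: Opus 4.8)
The plan is to build the $t$-cover $S$ directly from a small $t$-difference-cover and then design the witness function $h$ so that it answers in constant time. First I would invoke the previous lemma (\cite{Maekawa85}) to obtain, in $O(\sqrt{t})$ time, a $t$-difference-cover $D \subseteq [0..t-1]$ with $|D| = O(\sqrt{t})$, and simply set $S = \{i \in [1..n] \mid (i \bmod t) \in D\}$, which matches the definition of a $t$-cover verbatim. The size bound is then immediate: each length-$t$ window of $[1..n]$ contributes exactly $|D| = O(\sqrt{t})$ positions to $S$, and there are $\lceil n/t \rceil$ such windows, so $|S| = O(\sqrt{t} \cdot n/t) = O(n/\sqrt{t})$. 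I would enumerate $S$ in an output-sensitive manner — for each residue $r \in D$ list $r, r+t, r+2t, \ldots$ up to $n$ — so the set is produced in $O(|S|) = O(n/\sqrt{t})$ time rather than by scanning all $n$ positions.

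The core of the argument is to exhibit the constant-time function $h$. Given $1 \le i, j \le n-t$, write $a = i \bmod t$ and $d = (j-i) \bmod t$. By the defining property of a difference cover there exist $x, y \in D$ with $(x-y) \bmod t = d$, and I would set $h(i,j) = (y-a) \bmod t$. A one-line check then confirms the two required memberships: $(i + h(i,j)) \bmod t = y \in D$ and $(j + h(i,j)) \bmod t = (d+y) \bmod t = x \in D$, so both $i + h(i,j)$ and $j + h(i,j)$ have residues in $D$ and hence lie in $S$. Since $h(i,j) \in [0..t-1] \subseteq [0..t]$ and $i, j \le n-t$, the shifted indices stay within $[1..n]$, exactly as the definition demands.

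To make this lookup run in $O(1)$ time I would precompute a witness table $W[0..t-1]$ storing, for every difference value $d$, one pair $(x,y) \in D \times D$ realizing it; this is filled by scanning all $O(|D|^2) = O(t)$ ordered pairs from $D$, and because $D$ is a difference cover every entry gets set. A query then computes $d = (j-i) \bmod t$, reads $(x,y) = W[d]$, and returns $(y - (i \bmod t)) \bmod t$, all in constant time. The one point that needs care — and the main obstacle to a clean $O(n/\sqrt{t})$ statement — is precisely this table: building $W$ costs $O(t)$ time and space, which is absorbed into $O(n/\sqrt{t})$ exactly when $t \le n^{2/3}$, and in particular throughout the range $1 \le \tau \le \sqrt{n}$ in which our data structure operates. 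Within that regime the total construction time and space remain $O(n/\sqrt{t})$, giving the claimed bound.
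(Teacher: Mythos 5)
The paper never proves this lemma itself --- it is imported as a citation to Burkhardt and K\"arkk\"ainen --- so your proposal stands as an independent proof attempt, and most of it is sound: constructing $S$ from the Maekawa difference cover $D$, the $O(\frac{n}{\sqrt{t}})$ size bound, the output-sensitive enumeration of $S$, and the algebraic verification that $h(i,j) = (y - (i \bmod t)) \bmod t$ places both $i+h(i,j)$ and $j+h(i,j)$ in $S$ are all correct.

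The genuine gap is the witness table $W$. Building it costs $O(t)$ time and space, which is within the claimed $O(\frac{n}{\sqrt{t}})$ bound only when $t \le n^{2/3}$, and your dismissal of the remaining range rests on a parameter mix-up: $\tau$ is \emph{not} the cover parameter. In this paper the $t$-covers are instantiated with $t = \tau^2$ (the abstract's $O(z\tau^2 + \frac{n}{\tau})$ is exactly the $O(zt + \frac{n}{\sqrt{t}})$ of Theorem~\ref{theo:main_theorem}), so the range $1 \le \tau \le \sqrt{n}$ corresponds to $1 \le t \le n$, and Theorem~\ref{theo:main_theorem} explicitly allows any $t \le n$. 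Worse, the regime $t > n^{2/3}$ is precisely where the paper's headline applications live: for highly repetitive strings the trade-off after Theorem~\ref{theo:generic_tradeoff} takes $t = \frac{n}{z}$, which is $\omega(n^{2/3})$ whenever $z = o(n^{1/3})$, and there your $O(t)$ preprocessing strictly exceeds $O(\frac{n}{\sqrt{t}})$. The fix is to not treat the difference cover as a black box but to use a structured (Maekawa/Colbourn--Ling style) one: with $k = \lceil \sqrt{t} \rceil$, take $D = [0..k-1] \cup \{jk \mid 0 \le j < \lceil t/k \rceil\}$, which has size $O(\sqrt{t})$. For any difference $d$ a witness pair is then obtained by pure arithmetic: letting $m = \lceil d/k \rceil k$, if $m \le t-1$ take $(x,y) = (m,\, m-d)$, and otherwise take $(x,y) = (0,\, t-d)$ (in this case one checks $t - d < k$, so $y \in [0..k-1]$). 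This makes $h$ constant-time computable with no table at all, and the whole construction runs in $O(\frac{n}{\sqrt{t}})$ time for every $t \le n$, as the lemma claims.
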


In what follows, we will denote by $S(t)$
an arbitrary $t$-cover of $[1..n]$ which satisfies
the conditions of Lemma~\ref{lem:t-cover}.
See Figure~\ref{fig:covers} for an example of a $t$-cover $S(t)$.
\begin{figure}[t]
\centerline{
  \includegraphics[width=0.7\linewidth]{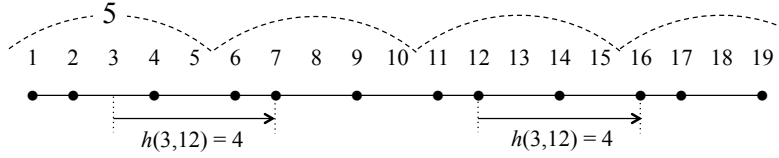}
}
\caption{Let $t=5$ and $D=\{1,2,4\}$.
  This figure shows an example of a 5-cover $S(5) = \{1,2,4,6,7,9,11,12,14,16,17,19\}$.
  The black dots represent the elements in $S(5)$.
  For instance, we have $h(3, 12) = 4$, namely, $3+4, 12+4 \in S(5)$.}
\label{fig:covers}
\end{figure}

\vspace*{1pc}
\noindent \textbf{Truncated suffix trees.}
For convenience, we assume that 
any string $w$ ends with a special end-marker $\$$
that appears nowhere else in $w$.
Let $n = |w|$.
For any $1 \leq q \leq n$,
the \emph{$q$-truncated suffix tree} of $w$, 
denoted $\TST{w}{q}$, is a Patricia
tree which represents $\Substr_q(w)$.
Namely,
$\TST{w}{q}$ is an edge-labeled rooted tree such that:
(1) Each edge is labeled with a non-empty substring of $w$;
(2) Each internal node $v$ has at least two children,
  and the labels of the out edges of $v$ begin with distinct characters;
(3) For any leaf $u$, there is at least one position $1 \leq i \leq  n$
  such that $w[i.. \min\{i+q-1, n\}]$ is the string obtained by
  concatenating the edge labels from the root to $u$;
(4) For any position $1 \le i \le n$ in $w$,
  there is a unique leaf $u$ such that
  $w[i..\min\{i+q-1, n\}]$ is the string obtained by
  concatenating the edge labels from the root to $u$.
\begin{wrapfigure}[11]{r}{55mm}
  \centerline{
    \includegraphics[width=50mm]{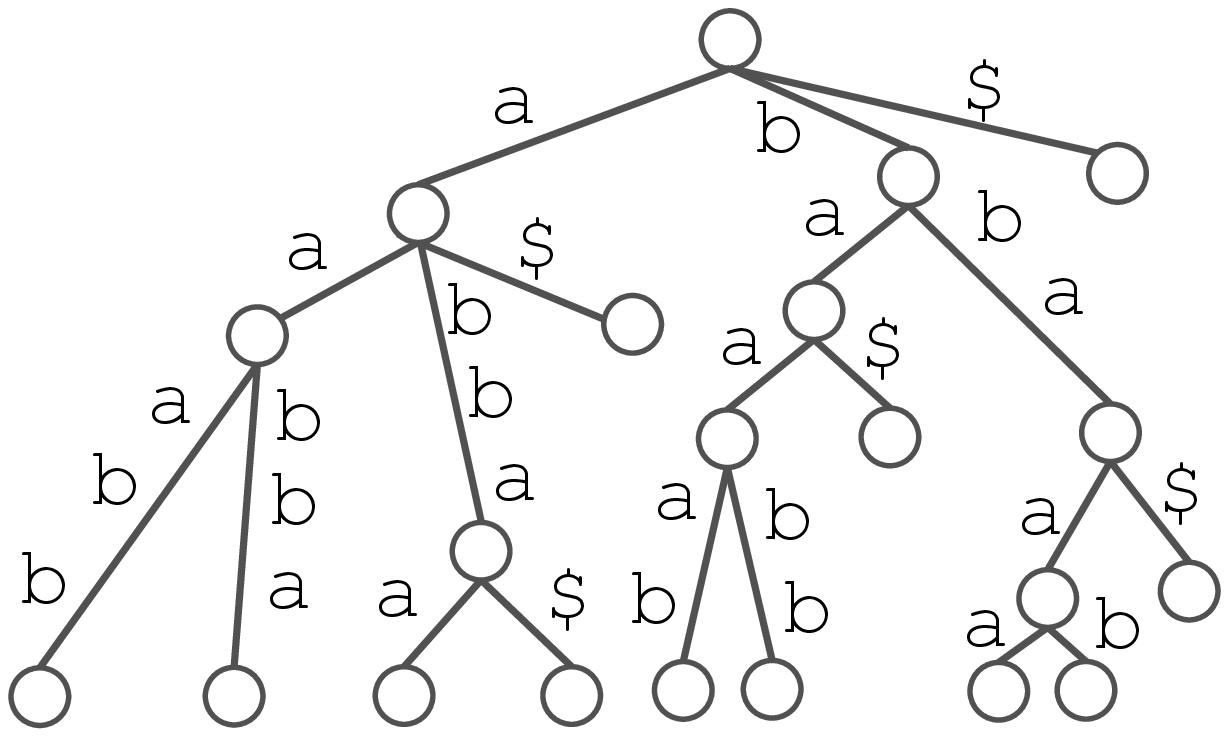}
  }
 \caption{$\TST{w}{5}$
  with string $w = \mathtt{baabbaabbaaabbaabba\$}$.}
 \label{fig:t-TST}
\end{wrapfigure}

Informally speaking,
$\TST{w}{q}$ can be obtained by trimming the full suffix tree of $w$
so that any path from the root represents a substring of at most $q$.
Clearly, the number of leaves in $\TST{w}{q}$ is equal to $|\Substr_q(w)|$.
We assume that the leaves of $\TST{w}{q}$ are sorted in lexicographical order.
Figure~\ref{fig:t-TST} shows an example of a $\TST{w}{q}$.
For any node $u$ of $\TST{w}{q}$,
$\Str(u)$ denotes the string spelled out by the path from the root to $u$.

In the case of the full suffix tree ($\TST{w}{n}$) of string $w$ of length $n$,
each edge label $x$ is represented by a pair $(i,j)$ 
of positions in $w$ such that $x = w[i..j]$.
We call $w$ as the \emph{reference string} for the full suffix tree,
and this way the full suffix tree can be stored in $O(n)$ space.
For $\TST{w}{q}$, Vitale et al.~\cite{VitaleMS15} showed how to represent $\TST{w}{q}$
in $O(|\Substr_q(w)|)$ space, including the reference string,
and how to construct them efficiently, both in time and space.
\begin{lemma}[\cite{VitaleMS15}] \label{lem:TST_const}
  Let $w$ be any string of length $n$ over an ordered alphabet of size $\sigma$.
  For any $1 \leq q \leq n$, let $y = |\Substr_q(w)|$.
  Then, the exists a reference string $w'$ of length $O(y)$ for $\TST{w}{q}$.
  Moreover, $\TST{w}{q}$ with the leaves sorted in lexicographical order,
  and a reference string $w'$
  can be constructed in $O(n\log\sigma)$ time with $O(y)$ working space.
\end{lemma}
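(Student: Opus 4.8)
The plan is to separate the statement into its two halves: the combinatorial claim that a reference string of length $O(y)$ exists, and the algorithmic claim that $\TST{w}{q}$ (with lexicographically sorted leaves) together with such a reference string can be built in $O(n\log\sigma)$ time using $O(y)$ working space. The structural setup is easy and I would dispatch it first. Since $\TST{w}{q}$ is a Patricia tree whose leaves are in bijection with $\Substr_q(w)$, it has exactly $y$ leaves, every internal node branches, so it has at most $2y-1$ nodes and at most $2y-2$ edges. Each edge label is, by definition, a substring of $w$ of length at most $q$, and a reference string $w'$ is any string in which every edge label occurs (so the label is stored as an interval of $w'$). Hence the topology, one interval per edge, and $w'$ together occupy $O(|w'|)$ words, and the whole problem reduces to producing a short $w'$.

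For the existence of a short $w'$, the key observation is that every edge label is a factor of some element of $\Substr_q(w)$: if $\alpha$ labels the edge entering $v$, then $\alpha$ is a factor of $\Str(u)$ for every leaf $u$ below $v$, and $\Str(u)$ is either a $q$-gram of $w$ or a suffix of $w$ shorter than $q$ (the latter being a suffix of the unique length-$(q-1)$ suffix). So it suffices to build one string $w'$ that contains every distinct $q$-gram and the length-$(q-1)$ suffix as substrings. I would obtain such a superstring from the de Bruijn graph whose vertices are the distinct $(q-1)$-grams of $w$ and whose edges are the distinct $q$-grams (the $q$-gram $g$ runs from $g[1..q-1]$ to $g[2..q]$). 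Because $w$ itself traces a single walk, this graph is connected, so its edges can be covered by $p=\max\!\big(1,\ \sum_v \max(0,\ \mathrm{outdeg}(v)-\mathrm{indeg}(v))\big)$ edge-disjoint trails; spelling these out and appending the length-$(q-1)$ suffix yields a string in which all $q$-grams occur, of total length at most $y+(q-1)p+(q-1)$.

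The crux is to show that $(q-1)p=O(y)$, i.e.\ that the overhead charged at trail endpoints is dominated by the number of distinct $q$-grams. The intuition is a charging argument: an imbalanced vertex is a right-branching $(q-1)$-gram, and whenever the first occurrences of $q$-grams are forced to spread out over a long region of $w$, the sliding length-$q$ window through that region must generate $\Theta(q)$ new distinct $q$-grams, so spreading ``pays for itself'' in the count $y$. Making this precise — equivalently, bounding the number and total length of the non-overlapping pieces needed to host all $q$-grams — is the main obstacle and is exactly the combinatorial content of Vitale et al.~\cite{VitaleMS15}; granting it, $|w'|=O(y)$.

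For the construction I would build $\TST{w}{q}$ online with a sliding window of width $\Theta(q)$, running an Ukkonen/McCreight-style suffix-tree algorithm that uses suffix links and truncates every path at string-depth $q$; keeping each node's branching in a balanced search tree keyed by the first character gives $O(n\log\sigma)$ time over an ordered alphabet of size $\sigma$, and a single DFS then outputs the leaves in lexicographic order. The delicate part is the $O(y)$ working space: the tree is already $O(y)$, but edge labels initially point into the text, so as characters leave the window I would re-anchor any still-needed label into the incrementally built reference string $w'$ of the previous paragraph, committing each distinct $q$-gram exactly once; the same charging that bounds $|w'|$ by $O(y)$ simultaneously bounds the number of committed characters, hence the peak working space. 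The second point requiring care is to verify that every label can always be re-anchored using only the current window together with the already-committed prefix of $w'$, so that no discarded portion of $w$ is ever revisited; this, like the length bound, is handled by the construction of Vitale et al.~\cite{VitaleMS15}.
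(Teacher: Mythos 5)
This lemma is not proved in the paper at all: it is imported verbatim from Vitale et al.~\cite{VitaleMS15}, and a proof by citation is all the paper offers. Your proposal ultimately does the same thing --- the two genuinely hard steps (the bound $(q-1)p = O(y)$ on the trail-decomposition overhead, and the guarantee that edge labels can always be re-anchored into the growing reference string within $O(y)$ space) are explicitly deferred to~\cite{VitaleMS15} --- so your write-up is essentially the paper's approach wrapped in a plausible, but unverified, outline of how the cited construction works.
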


We also show the following lemma.
\begin{lemma} \label{lem:zq_size_TST}
$\TST{w}{q}$ can be represented in $O(zq)$ space, where $z = |\LZ(w)|$.
\end{lemma}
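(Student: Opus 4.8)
The plan is to combine Lemma~\ref{lem:TST_const} with a purely combinatorial bound on the number of distinct $q$-grams. By Lemma~\ref{lem:TST_const}, $\TST{w}{q}$ together with its reference string occupies $O(y)$ space, where $y = |\Substr_q(w)|$ equals the number of leaves of $\TST{w}{q}$. Hence it suffices to prove $|\Substr_q(w)| = O(zq)$, which is where the factorization $\LZ(w)$ enters.

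First I would fix notation for the factor boundaries: let $p_1 = 1$ and $p_j = |f_1 \cdots f_{j-1}| + 1$ for $2 \le j \le z$ be the $z$ starting positions of the LZ77 factors, and set $p_{z+1} = n+1$. The heart of the argument is a charging lemma: every distinct string in $\Substr_q(w)$ has an occurrence $w[i..\min\{i+q-1,n\}]$ whose window $[i..\min\{i+q-1,n\}]$ contains at least one boundary $p_j$. Each distinct $q$-gram will then be charged to such a boundary.

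To prove the charging lemma, take any $u \in \Substr_q(w)$ and consider its \emph{leftmost} occurrence starting at position $i$; let $f_k$ be the factor with $p_k \le i < p_{k+1}$. If the occurrence reaches or crosses $p_{k+1}$, then its window contains the boundary $p_{k+1}$ and we are done. Otherwise the occurrence lies entirely inside $f_k$, and I would show $i = p_k$. Indeed, if $i > p_k$ then $f_k$ is a copied factor, so $w[p_k..p_{k+1}-1] = w[s..s+|f_k|-1]$ for some source position $s < p_k$; translating the occurrence of $u$ by $s - p_k$ yields an occurrence of $u$ starting at $s + (i - p_k) < i$, contradicting leftmostness. (Single-new-character factors have length one, which forces $i = p_k$ trivially.) Thus $i = p_k$ and the window contains $p_k$. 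I expect this step to be the main obstacle: under the self-referential definition of LZ77 the source region $[s..s+|f_k|-1]$ may overlap the factor itself, so I would need a careful argument that the prefix of $u$ sitting inside the copied region genuinely has an earlier occurrence rather than one that merely re-enters $f_k$.

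Finally I would count. For a fixed boundary $p_j$, the windows $[i..i+q-1]$ containing $p_j$ are exactly those with $i \in [p_j - q + 1 .. p_j]$, giving at most $q$ starting positions and hence at most $q$ distinct $q$-grams charged to $p_j$. The $q-1$ short suffixes of $w$ of length below $q$ contribute only an additive $O(q)$ term. Summing over the $z$ boundaries yields $|\Substr_q(w)| \le zq + O(q) = O(zq)$, and combining this with Lemma~\ref{lem:TST_const} gives the claimed $O(zq)$-space representation of $\TST{w}{q}$.
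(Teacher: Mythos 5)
Your proof is correct and takes essentially the same approach as the paper's: both reduce the claim via Lemma~\ref{lem:TST_const} to showing $|\Substr_q(w)| = O(zq)$, and both prove this by using leftmost occurrences together with the LZ77 copy property to force each occurrence window onto a factor boundary, then counting at most $q$ windows per boundary (the paper's case analysis is marginally tighter, ruling out even your $i = p_k$ case so that every leftmost occurrence genuinely crosses between two adjacent factors). The self-referential overlap you flag as the main obstacle is actually a non-issue: by the definition of $\LZ(w)$, $w[s..s+|f_k|-1] = f_k$ holds verbatim regardless of whether the source overlaps the factor, so the translated occurrence starts at $s+(i-p_k) < i$, which is all that leftmostness requires.
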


\begin{proof}
  By Lemma~\ref{lem:TST_const},
  it suffices to show that $|\Substr_q(w)| = O(zq)$.
  For each $q$-gram $p \in \Substr_q(w)$,
  let $\locc_w(p)$ be the beginning position of the leftmost occurrence
  of $p$ in $w$.
  If $q = 1$, then clearly $|\Substr_1(w)| \leq z$
  and hence the lemma holds.
  If $q \geq 2$ then the interval $[\locc_w(p)..\locc_w(p)+q-1]$ must cross
  the boundary of two adjacent factors of $\LZ(w)$,
  since otherwise the interval is completely contained in
  a single factor of $\LZ(w)$ but
  this contradicts that $[\locc_w(p)..\locc_w(p)+q-1]$
  is the leftmost occurrence of $p$ in $w$.
  Clearly, the maximum number of $q$-grams that can 
  cross a boundary of $\LZ(w)$ is $q-1$.
  Hence, the total number of distinct $q$-grams in $w$ is $O(zq)$.
  Also, $\Substr_{q}(w)$ contains $q$ substrings $w[n-q+1], \ldots, w[n]$
  of $w$ which are shorter than $q$.
  Overall, we obtain $|\Substr_q(w)| = O(zq)$.
\end{proof}

The next theorem follows from Lemmas~\ref{lem:TST_const} and~\ref{lem:zq_size_TST} and an obvious fact that $|\Substr_q(w)| \leq n$.

\begin{theorem} \label{theo:TST_zq_const}
  Given a string $w$ of length $n$ over an ordered alphabet of size $\sigma$
  and integer $1 \leq q \leq n$,
  we can construct an $O(\min\{zq, n\})$-space representation of
  $\TST{w}{q}$ in $O(n \log \sigma)$ time with $O(\min\{zq, n\})$ working space.
\end{theorem}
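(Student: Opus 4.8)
The plan is to derive Theorem~\ref{theo:TST_zq_const} purely by combining the two preceding lemmas with the trivial observation $|\Substr_q(w)| \le n$. Everything reduces to controlling the single quantity $y = |\Substr_q(w)|$, since this is exactly the parameter that governs both the size of the representation and the working space in Lemma~\ref{lem:TST_const}. So the whole proof is a matter of showing $y = O(\min\{zq, n\})$ and then invoking Lemma~\ref{lem:TST_const} verbatim.

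First I would assemble the two upper bounds on $y$. The argument already carried out inside the proof of Lemma~\ref{lem:zq_size_TST} gives $y = |\Substr_q(w)| = O(zq)$: the leftmost occurrence of each $q$-gram must straddle a boundary between two adjacent Lempel-Ziv factors, of which there are fewer than $z$, each boundary is crossed by at most $q-1$ distinct $q$-grams, and there are $q$ further short trailing suffixes to account for. Independently, because $\Substr_q(w)$ is by definition indexed by the $n$ starting positions $1 \le i \le n$, we trivially have $y \le n$. Combining $y \le c\,zq$ for some constant $c \ge 1$ with $y \le n$ yields $y \le c\,\min\{zq, n\}$, that is, $y = O(\min\{zq, n\})$.

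Then I would feed this bound into Lemma~\ref{lem:TST_const}. That lemma constructs $\TST{w}{q}$, with its leaves sorted lexicographically, together with a reference string $w'$ of length $O(y)$, in $O(n\log\sigma)$ time using $O(y)$ working space. Substituting $y = O(\min\{zq, n\})$ makes the representation occupy $O(\min\{zq, n\})$ space, the working space $O(\min\{zq, n\})$, and leaves the construction time at $O(n\log\sigma)$, which is precisely the claim.

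There is no genuine obstacle; the only point I would check carefully is that the working-space guarantee of Lemma~\ref{lem:TST_const} is stated in terms of $y$ and not $n$, so that the $\min$ with $zq$ propagates to the working space rather than only to the final output size. Since that lemma caps the working space at $O(y)$ and $y$ never exceeds $n$, the peak space stays $O(n)$ even when $zq > n$ and drops to $O(zq)$ whenever $zq < n$, giving the stated $O(\min\{zq, n\})$ bound uniformly for both the representation and the construction.
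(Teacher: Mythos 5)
Your proposal is correct and follows exactly the paper's (implicit) argument: the theorem is stated there as an immediate consequence of Lemma~\ref{lem:TST_const}, Lemma~\ref{lem:zq_size_TST}, and the trivial bound $|\Substr_q(w)| \leq n$, which is precisely the combination you carry out. Your extra care in checking that the working-space bound of Lemma~\ref{lem:TST_const} is stated in terms of $y = |\Substr_q(w)|$ rather than $n$ is exactly the right point to verify, and it holds.
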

In what follows, we will only consider interesting cases
where $zq < n$ for a given $1 \leq q \leq n$,
and will simply use $O(zq)$ to denote the size of $\TST{w}{q}$.
\section{Our LCE data structure} \label{sec:LCE_data_structure}
\subsection{Overview of our algorithm}

The general framework of our space-efficient LCE algorithm follows the approach
of Gawrychowski et al.'s LCE algorithm for
strings over a general ordered alphabet~\cite{GawrychowskiKRW16}.
Namely, we compute $\LCE(i,j)$ using the two following types of queries:
\begin{eqnarray*}
  \ShortLCE_t(i,j) & = & \min (\LCE(i,j),t), \\
  \LongLCE_t(i, j) & = & \begin{cases}
      \lfloor \text{LCE}(i,j)/t \rfloor & \text{if } i,j \in S(t),\\
      \bot & \text{otherwise.}
    \end{cases}
\end{eqnarray*}
$\LCE(i,j)$ is computed in the following manner.
Let $\delta = h(i, j)$.
Recall that $\delta \leq t$ can be computed in constant time
and that $i + \delta, j + \delta \in S(t)$.
First, we compare up to the first $\delta$ characters of
$w[i..|w|]$ and $w[j..|w|]$ using $\ShortLCE_t(i, j)$.
If $l_1 = \ShortLCE_t(i, j)$ is shorter than $t$, then $\LCE(i,j) = l_1$.
If $l_1 = t$, then $\LCE(i,j)$ is at least $t$ long.
To check if it further extends, 
we compute $l_2 = \LongLCE_t(i + \delta, j + \delta)$, 
and $l_3 = \ShortLCE_t(i + \delta + l_2, j + \delta + l_2)$.
Finally, we get $\LCE(i,j) = \delta + t\cdot l_2 + l_3$.
See also Figure~\ref{fig:overview}.

The main difference between Gawrychowski et al.'s method and ours is in
how to compute $\ShortLCE_t(i,j)$. 
While they use a Union-Find structure that takes $O(n)$ working space
(for $O(n)$ queries) as a main tool,
we use an augmented $\TST{w}{2t}$ for $\Substr_{2t}(w)$
which occupies $O(zt + \frac{n}{\sqrt{t}})$ total space,
answers $\ShortLCE_t(i,j)$ queries in $O(1)$ time,
and can be constructed in $O(n \log \sigma)$ time
with $O(zt)$ working space.
How to answer $\LongLCE_t(i,j)$ queries is equivalent to Gawrychowski et al.'s,
namely, we sample the positions from $S(t)$ so that
LCE queries for these sampled positions can be answered in $O(1)$ time.
We show how to build the data structure for $\LongLCE_t(i,j)$ queries
by using $\TST{w}{t}$ for $\Substr_{t}(w)$ in $O(n \log \sigma)$ time
with $O(zt)$ working space.

\begin{figure}[t]
\centerline{
  \includegraphics[scale=0.6]{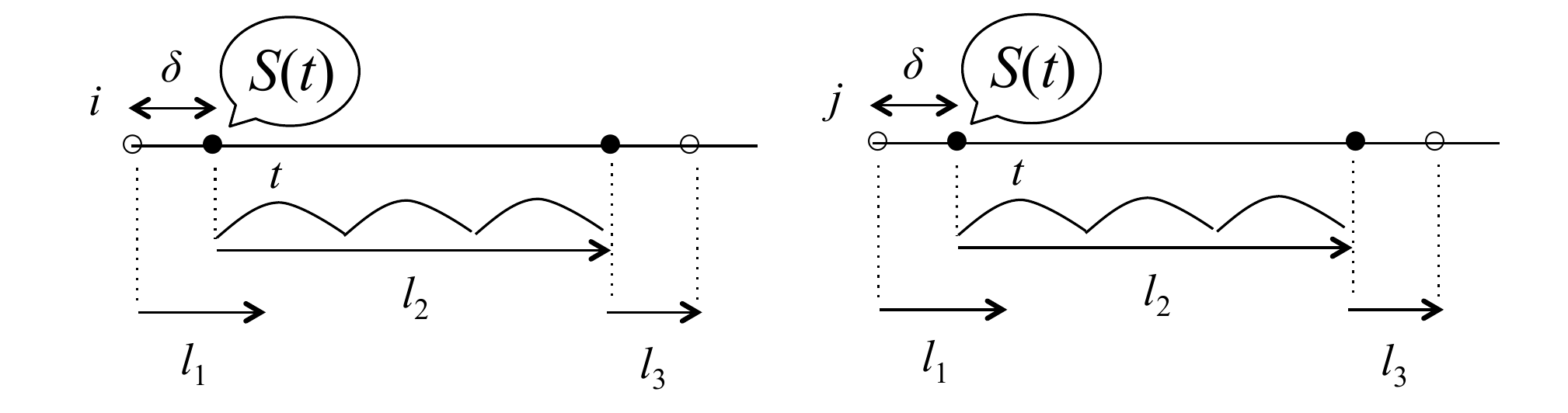}
  }
\caption{Illustration of an overview of our $\LCE(i,j)$ algorithm.
  We are given two positions $i$ and $j$ in string $w$.
  First, we compute $l_1 = \ShortLCE_t(i,j)$.
  If $l_1 < t$, then $\LCE(i,j) = l_1$. Otherwise, we compute $\LongLCE_t(i+\delta, j+\delta)$ where $i+\delta, j+\delta \in S(t)$ and $0 \leq \delta \leq t$. We finally compute $l_3 = \ShortLCE(i+\delta+l_2, j+\delta+l_2)$ where $l_2 = t\LongLCE_t(i+\delta,j+\delta)$. Then $\LCE(i,j) = \delta + l_2 + l_3$.}
\label{fig:overview}
\end{figure}

\subsection{$\boldsymbol{\ShortLCE_t}$ queries} \label{sec:short_LCE}

For $\ShortLCE_t(i, j)$ queries, we use
$\TST{w}{2t}$ which represents the set $\Substr_{2t}(w)$ of
all substrings of $w$ of length at most $2t$.
For any position $1 \leq i \leq n$,
let $p_i$ denote the substring of $w$
that begins at position $i$ and is of length at most $2t$,
namely, $p_i = w[i..\min\{i+2t-1, n\}]$.
Notice that $\Substr_{2t}(w) = \bigcup_{i=1}^{n}\{p_i\}$.
For any position $1 \leq i \leq n$ in $w$,
let $\ell(i) = u$ iff $u$ is the leaf of $\TST{w}{2t}$
such that $\Str(u) = p_i$.
Basically, we will compute $\ShortLCE_t(i, j)$ by efficiently finding
the LCA of the corresponding leaves $\ell(i)$ and $\ell(j)$ on $\TST{w}{2t}$.
The reason that we use $\TST{w}{2t}$ rather than $\TST{w}{t}$
will become clear later.

Now the key is how to access $\ell(i)$ for a given position $i$ in $w$.
As our goal is to build a sub-linear space data structure for $\ShortLCE_t$ queries,
we cannot afford to store a pointer to $\ell(i)$ from every position $1 \leq i \leq n$.
Thus, we store such a pointer only from every $t$-th positions in $w$.
We call these positions as \emph{sampled positions}.
Formally, for every sampled position
$j \in Q_{t,n} = \{1 + kt \mid 0 \leq k \leq \lceil \frac{n}{t} \rceil - 1\}$
we explicitly store a pointer from $j$ to its
corresponding leaf $\ell(j)$ on $\TST{w}{2t}$.
Also, for each position $1 \leq i \leq n$ in $w$,
let $\alpha(i) = \max\{j \in Q_{t,n} \mid j \leq i\}$.
Namely, $\alpha(i)$ is the closest sampled position in $Q_{t,n}$
to the left of $i$ (or it is $i$ itself if $i \in Q_{t,n}$).

Given a position $1 \leq i \leq n$,
$\alpha(i)$ can be computed in $O(1)$ time by a simple arithmetic.
Hence, we can access the leaf $\ell(\alpha(i))$ for the
closest sampled position $\alpha(i)$ in $O(1)$ time.
The next task is to locate $\ell(i)$.
To describe our constant-time algorithm,
let us consider a conceptual DAG $G = (V, E)$ such that
$V = \Substr_{2t}(w)$ and 
$E = \{(u, c, v) \mid u[1..2t-1] = v[2..2t], c = v[1] \}$, 
where $(u, c, v)$ represents a directed edge labeled $c$ from $u$ to $v$.
This DAG $G$ is equivalent to the edge-reversed de Bruijn graph of
order $2t$, with extra nodes for the $2t-1$ suffixes of $w$
which are shorter than $2t$.
It is clear that there is a one-to-one correspondence between
the leaves of $\TST{w}{2t}$ and the nodes of the DAG $G$.
Thus, we will identify each leaf of $\TST{w}{2t}$ with the nodes of DAG $G$.

\begin{lemma} \label{lem:de_Bruijn_graph_const}
  Given $\TST{w}{2t}$ for a string $w$ of length $n$
  and for any $1 \leq 2t \leq n$,
  we can construct the DAG $G$ in $O(n)$ time using $O(zt)$ working space.
\end{lemma}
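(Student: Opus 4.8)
The goal is to construct the DAG $G$, which is (essentially) the edge-reversed de Bruijn graph of order $2t$ with extra nodes for short suffixes. The vertices are the elements of $\Substr_{2t}(w)$, i.e. exactly the leaves of $\TST{w}{2t}$, and an edge $(u,c,v)$ connects $u$ to $v$ when $u[1..2t-1] = v[2..2t]$ and $c = v[1]$. In other words, prepending a character $c$ to (a prefix of) the string at $u$ yields the string at $v$. So each node $u$ has out-edges that correspond to the possible left-extensions of $\Str(u)$ in $w$.

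The plan is to compute, for each leaf $v$, the ``suffix-link-like'' target: the leaf $u$ whose string equals $v[2..2t]$ padded to length $2t$ (i.e. the string obtained by deleting the first character of $\Str(v)$). First I would label every leaf of $\TST{w}{2t}$ with an integer identifier in $[1..y]$ where $y = |\Substr_{2t}(w)| = O(zt)$; since the tree is given with leaves sorted lexicographically, this is immediate. The core step is to build the map sending each leaf $v$ to the leaf representing $v[2..2t]$. I would obtain this by scanning the string $w$ left to right (or using the positions recorded at the leaves): as we read $w$, consecutive positions $i$ and $i+1$ give strings $p_i$ and $p_{i+1}$ where $p_{i+1}$ is exactly $p_i$ with its first character removed (up to the length truncation near the end of $w$). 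Hence the leaf $\ell(i)$ has a directed edge to $\ell(i+1)$ labeled $w[i]$, and traversing all $n$ positions yields every edge of $G$ together with its label. Collecting these edges, removing duplicates, gives the adjacency structure of $G$.

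Concretely, I would first compute $\ell(i)$ for all $i$ by a single pass: using the truncated suffix tree one can follow edges to locate each leaf, but to stay within $O(n)$ time the cleanest route is to exploit the consecutive relationship directly. Because each $p_{i+1}$ differs from $p_i$ only by a left-deletion (and a possible right-extension by one character), we can track the current leaf incrementally, spending amortized $O(1)$ time per position, analogous to how Ukkonen/McCreight suffix links are traversed. Each step emits one directed edge of $G$; after $n$ steps we have produced at most $n$ edges. Since $G$ has $O(zt)$ nodes and out-degree bounded by $\sigma$ (but total edges bounded by $n$), we sort or bucket the collected edges to deduplicate them in $O(n)$ time using radix/bucket sort on node identifiers, yielding the final $O(zt)$-size adjacency lists. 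All auxiliary arrays are indexed by the $O(zt)$ node identifiers or are transient per-position values, so the working space is $O(zt)$.

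The main obstacle I anticipate is establishing the incremental leaf-tracking so that locating $\ell(i+1)$ from $\ell(i)$ costs amortized $O(1)$, since a direct lookup per position could cost $O(t)$ and blow the time bound up to $O(nt)$. The delicate point is that $\TST{w}{2t}$ is a \emph{truncated} suffix tree, so it need not carry genuine suffix links, and the string at $\ell(i+1)$ is the left-deletion of the string at $\ell(i)$ rather than a suffix in the usual sense; I would need to verify that the requisite ``left-deletion'' links exist among the leaves and can be precomputed (or navigated) cheaply. Once these links are available — built during or alongside the tree construction of Lemma~\ref{lem:TST_const} — emitting and deduplicating the $O(n)$ edges to form $G$ is routine, giving the claimed $O(n)$ time and $O(zt)$ working space.
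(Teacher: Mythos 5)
Your construction has the right combinatorial picture (the edges of $G$ arise exactly from pairs of leaves at consecutive text positions), but two steps do not hold up as written. First, the step you yourself flag as needing verification --- relocating from $\ell(i)$ to $\ell(i+1)$ in amortized $O(1)$ time --- is not a detail to be checked later; it is the entire content of the lemma. A truncated suffix tree carries no suffix links, the left-deletion of $\Str(\ell(i))$ may correspond to an implicit point in the middle of an edge, and re-descending to the leaf $\ell(i+1)$ requires a child lookup, which over a general ordered alphabet costs $\Theta(\log\sigma)$ per position (or $\Theta(t)$ if done naively), so the claimed $O(n)$ bound does not follow from an analogy with Ukkonen/McCreight. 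The paper does not attempt this machinery itself: it invokes the result of Cazaux, L\'ecroq and Rivals~\cite{CazauxLR15}, which constructs the de Bruijn graph of order $q$ from $\TST{w}{q}$ in $O(n)$ time and output-linear space, and then merely adds the $2t-1$ nodes for short suffixes and reverses all edges. A self-contained proof must supply exactly what that citation provides; your sketch leaves it open.

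Second, your space accounting is inconsistent. You emit one edge occurrence per text position ($n$ of them) and then ``sort or bucket the collected edges'' to deduplicate --- but storing $n$ edge occurrences before sorting is $\Theta(n)$ working space, not $O(zt)$, which defeats the purpose of the lemma (the working space must remain $O(zt)$ even though the scan takes $O(n)$ time). Deduplicating on the fly instead requires a constant-time per-node membership test over a general alphabet, which is again the nontrivial part. Relatedly, you never establish that the deduplicated graph itself fits in $O(zt)$ space: bounding the edge count by $n$, or by the number of nodes times $\sigma$, is not enough. The paper closes this with a separate argument that each edge of $G$ corresponds to a distinct element of $\Substr_{2t+1}(w)$, hence $|E| = |\Substr_{2t+1}(w)| = O(zt)$ by the same leftmost-occurrence/LZ-boundary argument as in Lemma~\ref{lem:zq_size_TST}; some such argument is indispensable in your route as well. (A smaller slip: the edge generated at position $i$ goes from $\ell(i+1)$ to $\ell(i)$ with label $w[i]$, not from $\ell(i)$ to $\ell(i+1)$, since edges of $G$ prepend a character.)
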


\begin{proof}
  The de Bruijn graph of order $q$ for a string of length $n$
  can be constructed in $O(n)$ time using space linear in the size
  of the output de Bruijn graph,
  provided that $\TST{w}{q}$ is already constructed~\cite{CazauxLR15}.
  By setting $q = 2t$, adding extra $2t-1$ nodes
  for the suffixes that are shorter than $2t$,
  and reversing all the edges,
  we obtain our DAG $G = (V, E)$.

  The number of nodes in $V$ is clearly equal to $|\Substr_{2t}(w)|$.
  Also, since each edge in $E$ corresponds to a distinct substring in
  $\Substr_{2t+1}(w)$,
  the number of edges in $E$ is equal to $|\Substr_{2t+1}(w)|$.
  By a similar argument to the proof of Lemma~\ref{lem:zq_size_TST},
  we obtain $|V| = |\Substr_{2t}(w)| = O(zt)$ and
  $|E| = |\Substr_{2t+1}(w)| = O(zt)$.
\end{proof}

Let $d = i-\alpha(i)$.
A key observation here is that there is a path of length $d$ from node
$p_i$ to node $p_{\alpha(i)}$ in this DAG $G$.
Since $G$ is a DAG, however, it is not easy to quickly
move from $p_{\alpha(i)}$ to $p_{i}$.
To overcome this difficulty, we consider a spanning tree
of $G$ of which the root is $p_{n} = w[n] = \$$.
Let $T$ denote any spanning tree of $G$.
See Figure~\ref{fig:doburin}
for examples of the DAG $G$ and its spanning tree $T$.
Although some edges are lost in spanning tree $T$,
it is enough for our purpose.
Namely, the following lemma holds.
\begin{lemma} \label{lem:correctness_de_Bruijn_tree}
Any spanning tree $T$ of $G$ satisfies the following properties:
(1) There is a non-branching path of length $2t$ from the root $p_n$
    to the node $p_{n-2t}$.
(2) For any $1 \leq i \leq n-2t-1$ and $0 \leq d < t$,
    let $g$ be the $d$-th ancestor of $p_{\alpha(i)}$.
    Then, $g[1..t] = p_i[1..t]$.
\end{lemma}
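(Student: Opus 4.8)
The plan is to treat the two properties separately, with property~(2) carrying the real content. For~(2) I would set up a ``certified prefix'' invariant and induct on the number of steps taken toward the root. The crucial observation is that one up-step, from a node $x$ to its parent $y$ in $T$, is realized by a $G$-edge $y \to x$, which by the definition of $E$ satisfies $y[1..2t-1] = x[2..2t]$. Thus moving one level up shifts the length-$2t$ window one position to the right: the first character of $x$ is dropped and a single, a priori unknown, character is appended at the right end. Starting from $x_0 = p_{\alpha(i)}$, which equals the genuine window $w[\alpha(i)..\alpha(i)+2t-1]$ of full length $2t$ and contains no $\$$ because $\alpha(i)+2t-1 \le i + 2t - 1 \le n-2$, I would prove by induction that the $m$-th ancestor $x_m$ of $p_{\alpha(i)}$ satisfies $x_m[1..2t-m] = w[\alpha(i)+m .. \alpha(i)+2t-1]$; the inductive step is exactly the window shift above, valid as long as $m \le 2t-2$.

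Evaluating this at $m = d = i - \alpha(i)$ and using $\alpha(i)+d = i$ gives $g[1..2t-d] = w[i..\alpha(i)+2t-1]$ for $g = x_d$. Since $\alpha(i)$ is the sampled position of $Q_{t,n}$ immediately to the left of $i$ we have $0 \le d < t$, hence $2t-d > t$, and therefore the first $t$ characters are certified: $g[1..t] = w[i..i+t-1] = p_i[1..t]$, which is property~(2). This is also where I would stress why $\TST{w}{2t}$ is used rather than $\TST{w}{t}$: the extra length-$t$ buffer is precisely what lets us lose up to $t-1$ trailing characters of certainty over the $d < t$ up-steps while still retaining a correct length-$t$ prefix. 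A small point to check along the way is that every intermediate $x_m$ really is a full-length (non-suffix) node, so that the window shift is well defined; this follows because its certified prefix of length $2t-m > t$ is $\$$-free and lies strictly to the left of position $n$.

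For property~(1), the engine is the uniqueness of the end-marker. Each of the nodes $p_n = \$, p_{n-1}, \ldots, p_{n-2t+1}$ contains $\$$, which occurs nowhere else in $w$, so each of them occurs exactly once in $w$. I would use this to pin down their incoming edges: for $n-2t+1 \le k \le n-1$, any node $u$ with $u[1..2t-1] = p_k[2..2t]$ must begin with the $\$$-containing string $p_k[2..2t]$, and the only node beginning with it is the suffix $p_{k+1}$ itself, since nothing can follow the $\$$; symmetrically the short $\$$-containing nodes have a unique outgoing edge. Hence in every spanning tree $T$ the parent of $p_k$ is forced to be $p_{k+1}$, which fixes the chain $p_n \to p_{n-1} \to \cdots \to p_{n-2t}$ and shows it is non-branching, giving~(1).

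The main obstacle, and the reason the statement is phrased with ``any spanning tree'', is that $T$ need not follow the position chain at nodes that have several incoming $G$-edges; consequently $g$ is in general not the node $p_i$ itself, but only a node sharing a length-$t$ prefix with it. The whole force of the invariant is to show that the unknown trailing characters introduced at each up-step never reach into the first $t$ positions, so the conclusion survives no matter which parents $T$ happened to select. I would finally note that the two properties are complementary: (2) covers the $\$$-free region $i \le n-2t-1$ where the window-shift argument applies, while the terminal region is exactly the non-branching chain isolated in~(1), so together they account for all query positions.
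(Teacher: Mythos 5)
Your proof of property (2) is, at its core, the same argument as the paper's: the paper likewise observes that each step toward the root deletes the first character of the window (``by following the in-coming edge of each node in the reversed direction, we delete the first character'') and that the extra length-$t$ buffer of $\TST{w}{2t}$ is what keeps the first $t$ characters intact after $d<t$ steps; your explicit invariant $x_m[1..2t-m]=w[\alpha(i)+m..\alpha(i)+2t-1]$ is a cleaner rendering of it. One of your supporting claims is false, however: it is not true that every intermediate $x_m$ is a full-length node, and your justification (the certified prefix is $\$$-free) does not imply it, because a suffix node carries its $\$$ at its \emph{last} position, which may lie beyond position $2t-m$. Concretely, if $w[\alpha(i)..\alpha(i)+2t-1]$ equals $w[n-2t..n-1]$, then the node $p_{\alpha(i)}$ coincides with $p_{n-2t}$, whose parent can be $p_{n-2t+1}$ and whose grandparent is then the short node $p_{n-2t+2}$. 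The repair is simple: along any edge of $G$ the node length drops by at most one, so $|x_m|\geq 2t-m$ by induction, which is all the invariant needs; this bound also shows the walk cannot hit the root (length $1$) within $d<t$ steps, so the $d$-th ancestor exists --- the very fact for which the paper invokes property (1). Once repaired, your (2) is self-contained and does not rely on (1), a small structural improvement over the paper, whose proof of (2) routes the depth bound through (1).

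The genuine gap is in property (1), at its last edge. Your forcing argument pins down incoming edges only for the nodes $p_k$ with $n-2t+1\le k\le n-1$ --- exactly those whose tail $p_k[2..]$ still contains $\$$ --- and your out-degree argument covers only the \emph{short} $\$$-containing nodes. Together these yield the non-branching chain $p_n\to\cdots\to p_{n-2t+1}$, of length $2t-1$ only. Your concluding phrase ``which fixes the chain $p_n\to\cdots\to p_{n-2t}$'' overreaches: $p_{n-2t}[2..2t]=w[n-2t+1..n-1]$ is $\$$-free, so the parent of $p_{n-2t}$ is not forced by your argument, and $p_{n-2t+1}$ is full-length, so its out-degree is not controlled by the ``short nodes'' argument either. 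Worse, under the purely overlap-based definition of $E$ that you quote (with no requirement that the merged string occur in $w$), property (1) is actually false: for $w=\mathtt{baxaba\$}$ and $t=1$, the edges $\$\to\mathtt{a\$}$, $\mathtt{a\$}\to\mathtt{xa}$, $\mathtt{xa}\to\mathtt{ax}$, $\mathtt{ax}\to\mathtt{ba}$, $\mathtt{ba}\to\mathtt{ab}$ all satisfy the overlap condition and form a valid spanning tree in which the path from the root to $p_{n-2t}$ (the node $\mathtt{ba}$) has length $4\neq 2t$. A correct proof must therefore use that $G$ is the edge-reversed de Bruijn graph of $w$, i.e.\ that every edge corresponds to a $(2t+1)$-gram \emph{occurring} in $w$ (this is how the paper constructs $G$ and counts its edges): any $(2t+1)$-gram containing $\$$ is a suffix of $w$, so $p_{n-2t+1}$ has exactly one outgoing edge, namely to $p_{n-2t}$; hence every path of $G$ from the root passes through $p_{n-2t}$, so any alternative parent of $p_{n-2t}$ is itself reachable only through $p_{n-2t}$, and selecting it would create a cycle. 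This forces the edge $p_{n-2t+1}\to p_{n-2t}$ in every spanning tree and gives both the length-$2t$ and the non-branching claims. (In fairness, the paper's own one-line proof of (1) glosses over exactly this point, but your write-up asserts the chain down to $p_{n-2t}$ from premises that only deliver it down to $p_{n-2t+1}$.)
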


\begin{proof}
  The first property is immediate from the fact that
  the last character $w[n] = \$$ occurs nowhere else in $w$
  and the root represents $p_n = \$$.

  Since $d < t$ and $|p_{\alpha(i)}| = 2t$,
  we have $p_{\alpha(i)}[d..d+t-1] = p_{i}[1..t]$.
  By the first property and $\alpha(i) \leq i \leq n-2t-1$,
  the depth of node $p_{\alpha(i)}$ is at least $2t$.
  Also, by following the in-coming edge of each node 
  in the reversed direction,
  we delete the first character of the corresponding string.
  Hence, $p_{i}[1..t]$ is a prefix of 
  the $d$-th ancestor $g$ of $p_{\alpha(i)}$.
\end{proof}

We are ready to show the main result of this section.
\begin{theorem} \label{theo:short_LCE}
For any string $w$ of length $n$ and integer $1 \le t \le n$, 
a data structure of size $O(zt + \frac{n}{t})$ can be constructed 
in $O(n \log \sigma)$ time using $O(zt)$ working space
such that subsequent $\ShortLCE_t(i,j)$ queries for any $1 \le i,j \le n$ can be answered in $O(1)$ time, 
where $z = |\LZ(w)|$.
\end{theorem}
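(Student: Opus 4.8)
The plan is to reduce each $\ShortLCE_t(i,j)$ query to a constant number of constant-time tree operations on $\TST{w}{2t}$ and on the spanning tree $T$ of the DAG $G$. First I would build $\TST{w}{2t}$ via Theorem~\ref{theo:TST_zq_const} in $O(n\log\sigma)$ time and $O(zt)$ space, and augment it with a constant-time LCA/string-depth structure~\cite{bender00:_lca_probl_revis}, so that for any two leaves $u,v$ I can obtain in $O(1)$ time the length of the longest common prefix of $\Str(u)$ and $\Str(v)$. Next I would construct $G$ and a spanning tree $T$ rooted at $p_n$ using Lemma~\ref{lem:de_Bruijn_graph_const} in $O(n)$ time and $O(zt)$ space, and equip $T$ with a linear-space data structure answering level-ancestor queries in $O(1)$ time. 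Finally I would store the sampling: an explicit pointer from each $j\in Q_{t,n}$ to the leaf $\ell(j)$, costing $O(n/t)$ words, plus direct pointers $\ell(i)$ for the $O(t)$ positions $i>n-2t-1$ near the right end, which fall outside the scope of Lemma~\ref{lem:correctness_de_Bruijn_tree}; the latter costs $O(t)=O(zt)$ words since $z\ge 1$.

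To answer $\ShortLCE_t(i,j)$, I would obtain a representative leaf for each of $i$ and $j$ independently. For an interior position $i\le n-2t-1$, I would compute $\alpha(i)$ and $d=i-\alpha(i)<t$ in $O(1)$ time by arithmetic, retrieve the leaf $p_{\alpha(i)}$ through its stored pointer, and take $g_i$ to be the $d$-th ancestor of $p_{\alpha(i)}$ in $T$ via one level-ancestor query; for a boundary position I would simply read off $\ell(i)$ from its stored pointer. Writing $g_i$, $g_j$ for the two representatives so obtained, the answer is then $\min(\lambda,t)$, where $\lambda$ is the length of the longest common prefix of $\Str(g_i)$ and $\Str(g_j)$, returned by a single LCA/string-depth query on $\TST{w}{2t}$.

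The step I expect to be the crux is the correctness of this reduction, i.e., showing $\min(\lambda,t)=\ShortLCE_t(i,j)$. Here is where the choice of $2t$ (rather than $t$) and Lemma~\ref{lem:correctness_de_Bruijn_tree}(2) do the work: although $T$ discards edges of $G$ so that $\Str(g_i)$ may disagree with $p_i$ beyond position $2t-d$, the bound $d<t$ guarantees $2t-d>t$, hence $\Str(g_i)[1..t]=p_i[1..t]$, and symmetrically $\Str(g_j)[1..t]=p_j[1..t]$ (for a boundary position this holds trivially). I would then split into two cases. If $\lambda\ge t$, the first $t$ characters of $\Str(g_i)$ and $\Str(g_j)$ coincide, whence $p_i[1..t]=p_j[1..t]$, so $\LCE(i,j)\ge t$ and $\min(\lambda,t)=t$. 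If $\lambda<t$, then $\Str(g_i)$ and $\Str(g_j)$ first differ at position $\lambda+1\le t$, and since both strings agree with $p_i$, $p_j$ respectively up to position $t$, the same disagreement occurs between $p_i$ and $p_j$, giving $\LCE(i,j)=\lambda<t$ and $\min(\lambda,t)=\lambda$. Either way the returned value equals $\ShortLCE_t(i,j)=\min(\LCE(i,j),t)$. A minor point to verify is that the $d$-th ancestor always exists, which follows from Lemma~\ref{lem:correctness_de_Bruijn_tree}(1): since $\alpha(i)\le i\le n-2t-1$, the node $p_{\alpha(i)}$ lies at depth at least $2t>d$.

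The complexity bounds are then immediate. A query performs $O(1)$ arithmetic operations, pointer look-ups, level-ancestor queries, and one LCA query, hence runs in $O(1)$ time. The space is $O(zt)$ for $\TST{w}{2t}$, $G$, $T$ and their augmentations, plus $O(n/t)$ for the sampled pointers, totalling $O(zt+n/t)$; construction is dominated by the $O(n\log\sigma)$-time, $O(zt)$-space build of $\TST{w}{2t}$, with the remaining structures built in $O(n)$ time and $O(zt)$ working space.
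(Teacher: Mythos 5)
Your proposal is correct and follows essentially the same route as the paper's proof: pointers from the sampled positions into a spanning tree of the reversed de Bruijn graph, level-ancestor queries to reach a representative leaf, and an LCA query on $\TST{w}{2t}$, with correctness resting on Lemma~\ref{lem:correctness_de_Bruijn_tree}. Your explicit treatment of the $O(t)$ boundary positions $i > n-2t-1$ (via direct pointers) and your spelled-out case analysis for $\lambda \ge t$ versus $\lambda < t$ are careful refinements of details the paper leaves implicit, not a different method.
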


\begin{proof}
We use a spanning tree $T$ enhanced with
a \emph{level ancestor} data structure~\cite{BenderF04}
which can be constructed in time and space linear
in the size of the input tree $T$.

Given two positions $i,j$ in $w$,
we answer $\ShortLCE_{t}(i,j)$ query as follows:
\begin{enumerate}
\item Compute the closest sampled positions $\alpha(i)$ and $\alpha(j)$ by simple arithmetics.
\item Access the nodes $p_{\alpha(i)}$ and $p_{\alpha(j)}$ in the spanning tree $T$ using pointers from the sampled positions $\alpha(i)$ and $\alpha(j)$,
  respectively.
\item Let $d = i-\alpha(i)$ and $d' = j-\alpha(j)$.
  Access the $d$-th ancestor $u$ of $p_{\alpha(i)}$ and
  the $d'$-th ancestor of $p_{\alpha(j)}$ using level ancestor queries on $T$.
\item Compute the LCA $x$ of the two leaves $u$ and $v$
  on $\TST{w}{2t}$, and return $\min\{|\Str(x)|, t\}$.
\end{enumerate}

The correctness follows from Lemma~\ref{lem:correctness_de_Bruijn_tree}.
Since each step of the above algorithm takes $O(1)$ time,
we can answer $\ShortLCE_t(i, j)$ in $O(1)$ time.
By Lemma~\ref{lem:zq_size_TST},
the size of $\TST{w}{2t}$ with an LCA data structure is $O(zt)$,
and also the size of the spanning tree $T$
with a level ancestor data structure is $O(|\Substr_{2t}(w)|) = O(zt)$.
In addition, we store pointers from the $\Theta(\frac{n}{t})$ sampled
positions to their corresponding nodes in $T$.
Overall, the total space requirement of our data structures
is $O(zt + \frac{n}{t})$.
We can build these data structures in a total of $O(n \log \sigma)$ time
using $O(zt)$ working space
by Theorem~\ref{theo:TST_zq_const} and Lemma~\ref{lem:de_Bruijn_graph_const}.
\end{proof}

\begin{figure}[tb]
	\centerline{
		\includegraphics[width=0.45\linewidth]{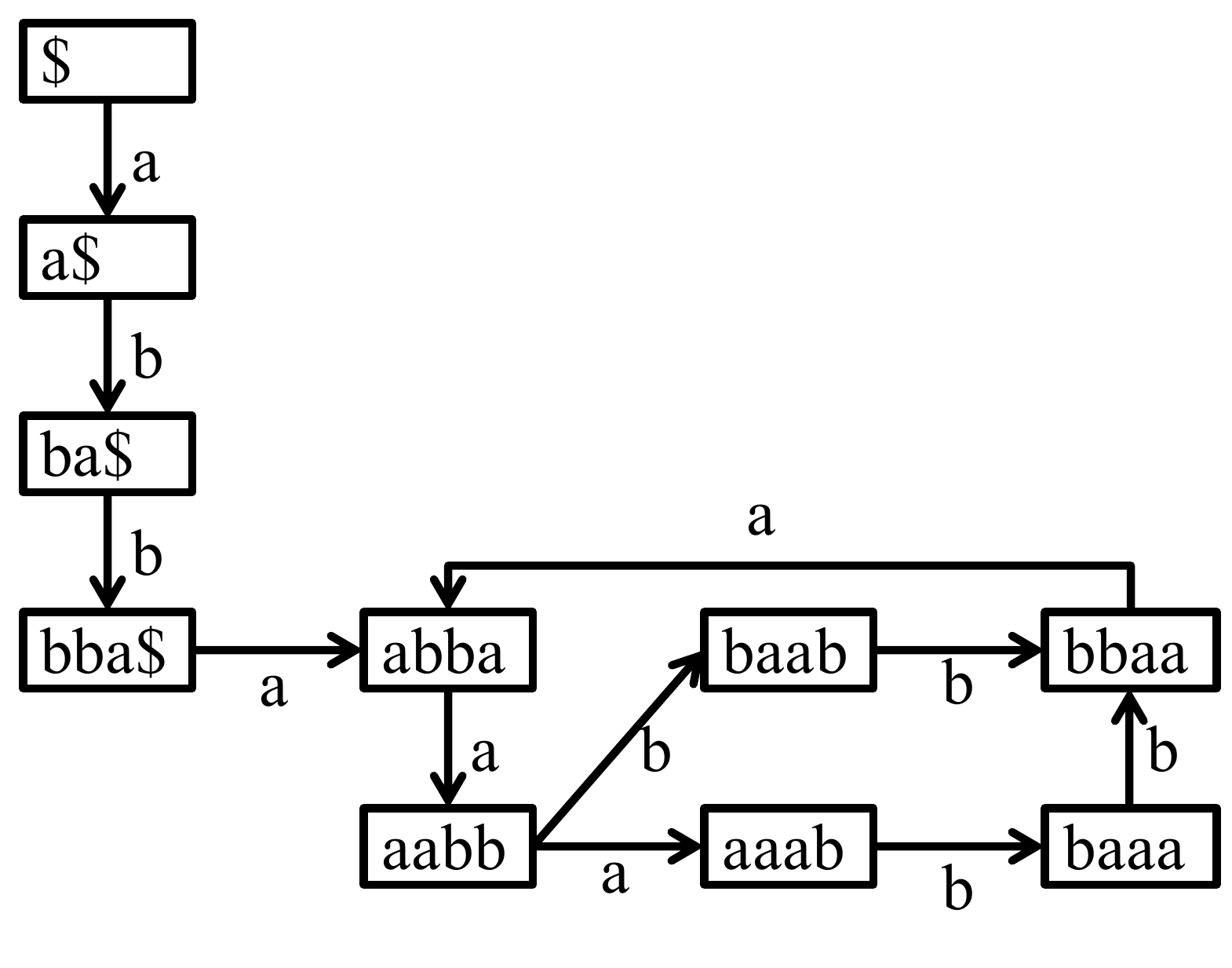}
		\includegraphics[width=0.45\linewidth]{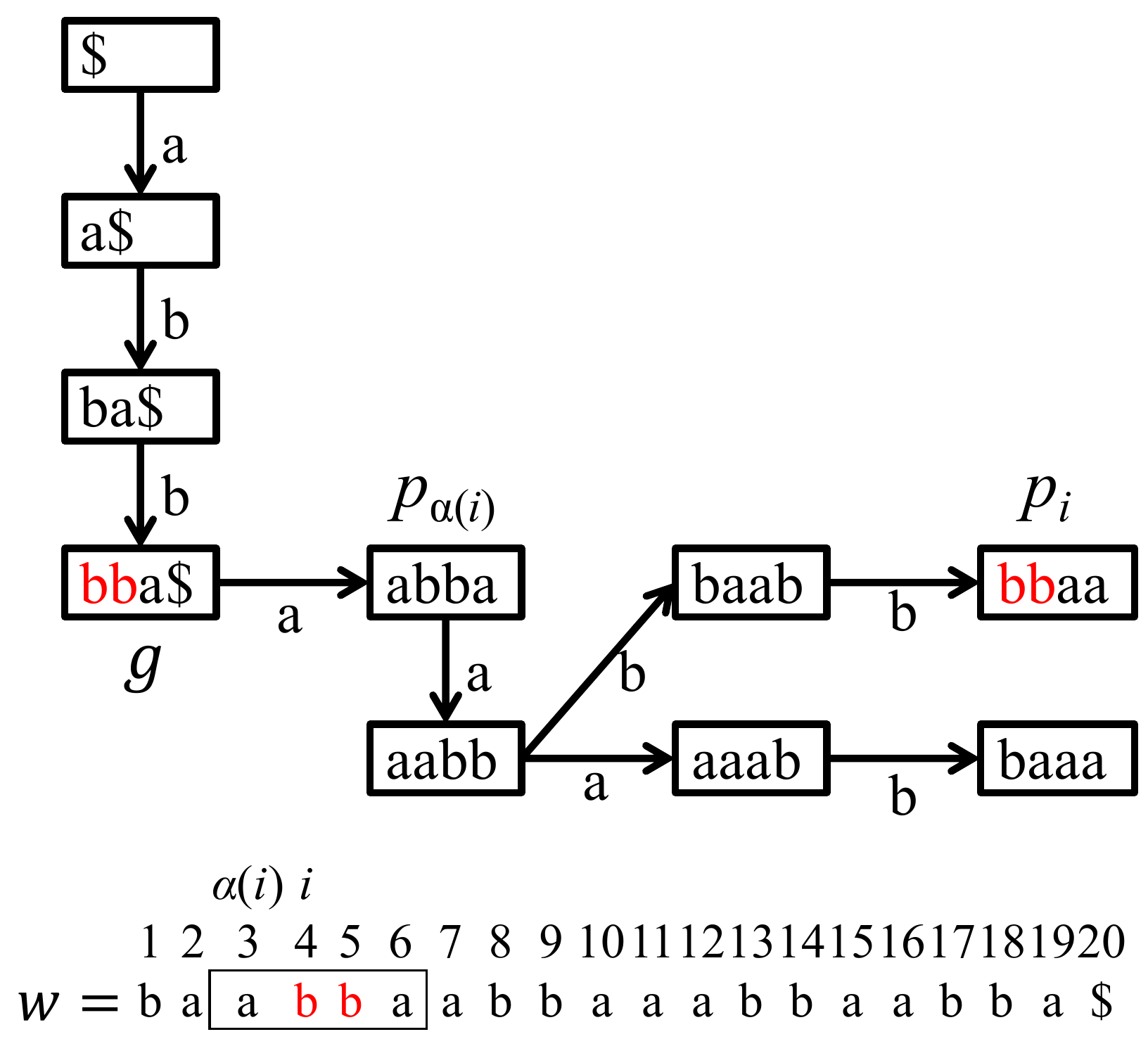}
                }
		\caption{The left graph $G$ is the edge-reversed de Bruijn graph of order $2t$, 
			with extra nodes for the $2t-1$ suffixes of $w$ which are shorter than $2t$, 
			where $t = 2$ and $w$ is the same string as in Figure~\ref{fig:t-TST}. 
			An edge from $u$ to $v$ labeled character $c$ represents $c \cdot u[1..2t-1] = v$.
			The right tree is a spanning tree of the left graph. 
			Let $i = 4$ and $\alpha(i) = 3$. 
			Then $p_{i} = bbaa$, $p_{\alpha(i)} = abba$. 
			Let $g$ be the $d$-th ancestor of $p_{\alpha(i)}$ in the right tree, where $0 \leq d < t$. 
			Then $g[1..t] = p_{i}[1..t]$ holds by Lemma~\ref{lem:correctness_de_Bruijn_tree}.}
		\label{fig:doburin}
\end{figure}

\subsection{$\boldsymbol{\LongLCE_t}$ queries}
\label{sec:long}

At a high level, 
our $\LongLCE_t(i, j)$ query algorithm is an adaptation of
the $t$-cover based algorithm by Puglisi and Turpin~\cite{PuglisiT08},
which was later re-discovered by Bille et al.~\cite{bille14:_time}.
Gawrychowski et al.~\cite{GawrychowskiKRW16} showed
that an $O(\frac{n}{\sqrt{t}})$-space data structure, which answers
$\LongLCE_t(i, j)$ query in $O(1)$ time, can be constructed in
$O(n \log t)$ time with $t = \Omega(\log^2 n)$
for a string of length $n$ over a general ordered alphabet.
In this section,
we show the same data structure as Gawrychowski et al.
can be constructed in $O(n \log \sigma)$ time with $O(zt + \frac{n}{t})$
working space for a general ordered alphabet of size $\sigma$
and any $1 \leq t \leq n$.

Consider a $t$-cover $S(t)$ of $[1..n]$ for some $t$-difference-cover $D$.
For each position $i \in S(t)$ such that $i+t-1 \leq n$,
the substring $b_i = w[i..i+t-1]$ is said to be a \emph{$t$-block}.
The goal here is to answer the block-wise LCE value
$\LongLCE_t(i,j)$ for two given positions in the $t$-cover $S(t)$.
Since we query $\LongLCE_t(i, j)$ only for positions $i, j \in S(t)$
and the answer to $\LongLCE_t(i, j)$ is a multiple of $t$,
we can regard each $t$-block as a single character.
Thus, we sort all $t$-blocks in lexicographical order,
and encode each $t$-block by its lexicographical rank.
Since each $t$-block is of length $t$,
we can sort the $t$-blocks in $O(\frac{n}{\sqrt{t}} \log \frac{n}{\sqrt{t}})$ time with $O(zt + \frac{n}{\sqrt{t}})$ working space
by using any suitable comparison-based sorting algorithm and
our $O(1)$-time $\ShortLCE_t$ query data structure of Section~\ref{sec:short_LCE}.
The next lemma shows that we can actually compute the lexicographical ranks
of all $t$-blocks more efficiently.

\begin{lemma} \label{lem:sort_t-blocks_1}
  Let $w$ be an input string of length $n$ and $1 \leq t \leq n$ be an integer. 
  Given the data structure for $\ShortLCE_t$ queries of
  Theorem~\ref{theo:short_LCE} for $w$,
  we can sort all $t$-blocks of $w$ in lexicographic order
  in $O(zt + \frac{n}{\sqrt{t}})$ time 
  using $O(zt + \frac{n}{t})$ working space, where $z = |\LZ(w)|$.
\end{lemma}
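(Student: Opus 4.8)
The plan is to avoid the comparison-based sort (which costs the extra $\log\frac{n}{\sqrt t}$ factor noted above) by reusing the fact that the leaves of $\TST{w}{2t}$ — already part of the $\ShortLCE_t$ data structure of Theorem~\ref{theo:short_LCE} — are stored in lexicographic order. The starting observation is that every $t$-block $b_i = w[i..i+t-1]$ is exactly the length-$t$ prefix of $\Str(\ell)$ for some leaf $\ell$ of $\TST{w}{2t}$, and that all leaves sharing a fixed length-$t$ prefix form a contiguous range in the lexicographic leaf order. Hence a single left-to-right scan of the leaves suffices to cut them into the $t$-block equivalence classes and assign each class its lexicographic rank. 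Since distinct $t$-blocks are in one-to-one correspondence with distinct length-$t$ prefixes of leaf strings, their number is at most the number of leaves $|\Substr_{2t}(w)| = O(zt)$ by Lemma~\ref{lem:zq_size_TST}.

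First I would precompute, in $O(zt)$ time by one traversal summing edge-label lengths, the string-depth of every node of $\TST{w}{2t}$, and use the constant-time $\mathsf{LCA}$ structure the tree already carries. Scanning the leaves in lexicographic order, I assign $t$-block ranks greedily, starting a new rank precisely when the string-depth of the $\mathsf{LCA}$ of two consecutive leaves drops below $t$ (equivalently, when their longest common prefix becomes shorter than $t$). This yields an array, indexed by leaf, giving the rank of the length-$t$ prefix of each leaf, in $O(zt)$ time and $O(zt)$ space. Leaves whose string is shorter than $t$ — the final short suffixes of $w$ — do not correspond to full $t$-blocks and are simply skipped, receiving no rank.

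Next I would map each relevant $t$-cover position to its rank. For every $i \in S(t)$ with $i+t-1 \le n$, I run the access procedure from the proof of Theorem~\ref{theo:short_LCE}: compute $\alpha(i)$, follow the sampled pointer to $p_{\alpha(i)}$, and take the $(i-\alpha(i))$-th ancestor by a level-ancestor query. This returns in $O(1)$ time a leaf $u$ of $\TST{w}{2t}$ with $\Str(u)[1..t] = b_i$ by Lemma~\ref{lem:correctness_de_Bruijn_tree}, and reading the precomputed rank of $u$ gives the rank of $b_i$. This is well defined because the stored rank depends only on the length-$t$ prefix of $u$, not on $u$ itself, so it does not matter which prefix-equivalent leaf the procedure reaches. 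Crucially, reusing the $\Theta(\frac{n}{t})$ sampled pointers of the $\ShortLCE_t$ structure means I never store a pointer from each of the $O(\frac{n}{\sqrt t})$ positions of $S(t)$, keeping the working space at $O(zt + \frac{n}{t})$, while the output sequence of $O(\frac{n}{\sqrt t})$ ranks is produced in $O(\frac{n}{\sqrt t})$ additional time.

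Combining the two phases gives total time $O(zt + \frac{n}{\sqrt t})$ and working space $O(zt + \frac{n}{t})$, as claimed. The main obstacle I expect is the combinatorial bookkeeping of the first phase: justifying that $t$-block classes coincide with the contiguous leaf ranges delimited by $\mathsf{LCA}$ string-depth $t$, and proving the scan assigns one consistent rank to every leaf sharing a length-$t$ prefix. A secondary subtlety is the handful of end-of-string positions where Lemma~\ref{lem:correctness_de_Bruijn_tree} does not directly apply and a few leaves must be handled separately; everything else reduces to constant-time tree queries already supplied by Theorem~\ref{theo:short_LCE}.
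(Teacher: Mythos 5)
Your proposal is correct and follows essentially the same route as the paper: precompute in $O(zt)$ time, on the lexicographically ordered $\TST{w}{2t}$ already present in the $\ShortLCE_t$ structure, the lexicographic rank of the length-$t$ prefix of every leaf, and then retrieve the rank of each of the $O(\frac{n}{\sqrt{t}})$ $t$-blocks via the $O(1)$-time leaf access of Theorem~\ref{theo:short_LCE}. The only difference is cosmetic: the paper inserts explicit (non-branching) nodes at string depth $t$ and associates each leaf with its depth-$t$ ancestor, whereas you delimit the same equivalence classes by testing whether the LCA of consecutive leaves has string depth at least $t$ --- both are standard $O(zt)$ traversals yielding the same ranks.
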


\begin{proof}
  We insert new (non-branching) nodes to $\TST{w}{2t}$ such that
  every $t$-gram in $w$ is represented by an explicit node.
  This increases the size of the tree by a constant factor.
  We also associate each node $u$ such that $|\Str(u)| = t$
  with the lexicographical rank of the $t$-gram $\Str(u)$
  among all $t$-grams in $w$.
  Then, we associate each leaf $\ell$ of the tree such that $|\Str(\ell)| \geq t$
  with its ancestor $v$ which represents a $t$-gram.
  All these can be preformed in $O(zt)$ total time
  by standard depth-first traversals on the tree.

  Then, for each $t$-block $b_i = w[i..i+t-1]$,
  we can access a leaf $\ell$ of $\TST{w}{2t}$ such that
  $\Str(\ell)[1..t] = w[i..i+t-1]$ in $O(1)$ time using the algorithm
  of Theorem~\ref{theo:short_LCE},
  and we return the rank of the ancestor $v$ of $\ell$
  that represents $b_i = w[i..i+t-1]$.
  Since there are $O(\frac{n}{\sqrt{t}})$ $t$-blocks in $w$,
  it takes a total of $O(zt + \frac{n}{\sqrt{t}})$ time.
  The working space is $O(zt + \frac{n}{t})$ by Theorem~\ref{theo:short_LCE}.
\end{proof}

There is an alternative algorithm
to sort the $t$-blocks, as follows:
\begin{lemma} \label{lem:sort_t-blocks_2}
For any string $w$ of length $n$ over an alphabet of size $\sigma$,
any integer $1 \leq t \leq n$, 
we can sort all $t$-blocks in lexicographic order in $O(n\log\sigma)$ time 
using $O(zt)$ working space, where $z = |\LZ(w)|$.
\end{lemma}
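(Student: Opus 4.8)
The plan is to replace the $\ShortLCE_t$-based method of Lemma~\ref{lem:sort_t-blocks_1} with a direct computation on the $t$-truncated suffix tree, trading the $\frac{n}{\sqrt t}$ query-driven running time for a single left-to-right scan of $w$ in order to shave the $\frac{n}{t}$ term off the working space. First I would build $\TST{w}{t}$ with its leaves arranged in lexicographic order, which by Theorem~\ref{theo:TST_zq_const} takes $O(n\log\sigma)$ time and $O(zt)$ working space. Since every full $t$-gram $w[i..i+t-1]$ is spelled out by a leaf of $\TST{w}{t}$ at string-depth exactly $t$, a single traversal of the leaves in lexicographic order assigns to each such leaf a rank among all $t$-grams; this ranking is precisely the sorted order of the $t$-blocks, because a $t$-block $b_i$ is exactly the $t$-gram beginning at a position $i \in S(t)$.

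The crux is to map each position $i \in S(t)$ with $i + t - 1 \le n$ to the rank of its leaf without storing a pointer from each of the $n$ positions, which would inflate the working space to $\Omega(n)$. For this I would run a matching-statistics-style suffix-link walk: I equip $\TST{w}{t}$ with suffix links and maintain an active locus representing $w[i..i+t-1]$ (more generally, the longest prefix of $w[i..n]$ of length at most $t$). To advance from $i$ to $i+1$, I follow the suffix link to drop the leading character, reaching the locus of $w[i+1..i+t-1]$, and then descend by reading $w[i+t]$ to recover the locus of $w[i+1..i+t]$; selecting the correct child at a branching node costs $O(\log\sigma)$. Whenever $i \in S(t)$ I output the stored rank of the current leaf (and a final bucket pass by rank yields the sorted list itself if desired). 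The only persistent data are $\TST{w}{t}$ and the constant-size active locus, so the working space stays $O(zt)$, while the produced rank array of size $O(\frac{n}{\sqrt t})$ is the output.

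The running time and the main obstacle both rest on the amortized analysis of the suffix-link walk. Because the string depth is capped at $t$ and every advance alternates a bounded decrease (the suffix-link step) with unit descents, the classical argument shows that the total number of character descents over the whole scan is $O(n)$; at $O(\log\sigma)$ per descent for child selection this gives the claimed $O(n\log\sigma)$ time. The delicate points I expect to treat carefully are (i) ensuring suffix links are present on the truncated tree, either as a by-product of the online construction or added in $O(zt)$ time afterwards, and verifying that truncation at depth $t$ does not corrupt the suffix-link structure, and (ii) the boundary cases where the active locus is an implicit point on an edge or where $w[i..n]$ is shorter than $t$ near the end of the string; these are standard but must be handled so that both the amortization and the depth-$t$ invariant are preserved. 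Combining the $O(zt)$ cost of ranking the leaves with the $O(n\log\sigma)$ scan yields the stated bounds.
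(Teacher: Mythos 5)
Your proposal is correct and achieves the stated bounds, but it implements the scan by a different mechanism than the paper. The paper's proof also starts by ranking the leaves of $\TST{w}{t}$ in $O(zt)$ time, but it then explicitly builds the edge-reversed de Bruijn graph of order $t$ (obtainable from the truncated suffix tree in $O(n)$ time, as in Lemma~\ref{lem:de_Bruijn_graph_const}) and scans $w$ \emph{right to left}: starting from the node for $w[n]=\$$, each step prepends $w[i]$ by traversing a single graph edge, so every position's node --- and hence its leaf rank --- is found with worst-case $O(\log\sigma)$ work per character and no amortization at all. Your left-to-right suffix-link walk simulates exactly the forward edges of that graph (suffix link $=$ drop the first character, descent $=$ append one), so the two proofs are really two implementations of one idea; what yours buys is avoiding the de Bruijn graph construction, at the cost of the two issues you flag. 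On those: (a) suffix links of \emph{internal} nodes are unproblematic (if $ax$ is right-branching in $\Substr_t(w)$ then so is $x$, at smaller depth, so the target is again an internal node of the truncated tree), and they come for free from Ukkonen-style online TST constructions; but (b) your amortization claim is stated too loosely. If, after following the suffix link of the current leaf's \emph{parent}, you re-descend to depth $t-1$ character by character, the cost per advance is $t-d+1$ where $d$ is the parent's string depth, which can total $\Theta(nt)$. You must re-descend by skip/count rescanning, and then the classical \emph{node-depth} potential argument (not string depth) bounds the total number of edge traversals over the scan by $O(n)$, each costing $O(\log\sigma)$ for child selection. Alternatively, precomputing for every leaf a suffix-link pointer to the (possibly implicit) locus of its length-$(t-1)$ suffix makes every advance worst-case constant-time --- but computing those pointers is essentially constructing the de Bruijn graph, i.e., precisely what the paper delegates to the cited construction.
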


\begin{proof}
  We use $\TST{w}{t}$ and the reversed de Bruijn graph of order $t$.
  We associate each leaf of the tree representing a $t$-gram
  with its lexicographical rank among all leaves in the tree.
    
  Let $r$ be the graph node which represents $w[n] = \$$.
  We simply traverse the graph while scanning
  the input string $w$ from right to left.
  For each $1 \leq i \leq n$,
  this gives us the graph node representing $b_i = w[i..i+t-1]$
  and hence the corresponding leaf of $\TST{w}{t}$.

  $\TST{w}{t}$ and the reversed de Bruijn graph 
  can be constructed in $O(n \log \sigma)$ time with $O(zt)$ working space.
  The ranks of the leaves in $\TST{w}{t}$ can be easily computed
  in $O(zt)$ time by a standard tree traversal.
  Traversing the reversed de Bruijn graph takes $O(n \log \sigma)$ time.
  Hence the lemma holds.
\end{proof}

For each $i \in S(t)$, let $r_i$ be the rank
of the $t$-block $b_i = w[i..i+t-1]$ computed by any of the algorithms above.
Clearly $r_i \in [1..n]$.
For simplicity, assume $\sqrt{t}$ is an integer.
For each position $i \in D$ (where $D$ is the underlying $t$-difference cover),
let $\#_{i} = r_{i} r_{i+t} \cdots r_{i+m_it}$,
where $m_i = \frac{n-i+1}{\sqrt{t}}-1$.
We create a string $\code(w) = \#_1 \$_1 \cdots \#_k\$_k$
of length $|S(t)| = O(\frac{n}{\sqrt{t}})$.
Since each $\#_i$ is a string over the integer alphabet $[1..S(t)] \subset [1..n]$
and $|D| = O(\sqrt{t})$, we can regard $\code(w)$ as a string
over an integer alphabet of size $O(n)$.
Then, we build the suffix array,
the inverse suffix array, the LCP array~\cite{manber93:_suffix_array}
of $\code(w)$
and an range minimum query (RMQ) data structure~\cite{bender00:_lca_probl_revis}
for the LCP array.
For any position $i \in S(t)$ on the original string $w$,
we can compute its corresponding position $i'$ on $\code(w)$ as
$i' =  |\#_1 \$_1 \#_2 \$_2 \cdots \#_{x-1}\$_{x-1}| + \frac{i-x}{t} + 1$
where $x = i \bmod t$.
Now, $\LongLCE_t(i, j)$ query for two positions $i, j \in S(t)$
on the original string $w$ reduces to an LCE query
for the corresponding positions on $\code(w)$,
which can be answered in $O(1)$ time using an RMQ on the LCP array.
All these arrays and the RMQ data structure
can be built in $O(\frac{n}{\sqrt{t}})$ time~\cite{Karkkainen:2006je,kasai01:_linear_time_longes_common_prefix,bender00:_lca_probl_revis}.

\begin{theorem} \label{theo:long_LCE}
For any string of length $n$ and integer $1 \le t \le n$, 
a data structure of size $O(\frac{n}{\sqrt{t}})$ can be constructed 
in $O(n \log \sigma)$ time
using $O(zt + \frac{n}{t})$ working space
such that subsequent $\LongLCE_t(i,j)$ queries for any $1 \le i,j \le n$ can be answered in $O(1)$ time, 
where $z = |\LZ(w)|$.
\end{theorem}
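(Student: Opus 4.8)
The plan is to reduce a $\LongLCE_t$ query on $w$ (which is only asked for positions in $S(t)$ and whose answer is always a multiple of $t$) to an ordinary longest-common-extension query on a much shorter meta-string, exactly as set up in the discussion preceding the statement. Concretely, I would proceed in four stages. First, sort all $t$-blocks $b_i = w[i..i+t-1]$ ($i \in S(t)$) into lexicographic order and give each the integer rank $r_i \in [1..n]$ of its block, using either Lemma~\ref{lem:sort_t-blocks_1} or Lemma~\ref{lem:sort_t-blocks_2}; this is the only stage that costs $O(n\log\sigma)$ time, and it fits the $O(zt + \frac{n}{t})$ working-space budget. Second, for each residue $x \in D$ concatenate the ranks of the blocks whose start positions lie in the arithmetic progression $x, x+t, x+2t, \dots$ into a string $\#_x$, separate the $\#_x$'s by the \emph{pairwise distinct} sentinels $\$_1,\dots,\$_k$ ($k=|D|=O(\sqrt t)$), and set $\code(w)=\#_1\$_1\cdots\#_k\$_k$; since $|S(t)|=O(\frac{n}{\sqrt t})$ the meta-string has length $O(\frac{n}{\sqrt t})$ over an integer alphabet of size $O(n)$. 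Third, build the suffix array, inverse suffix array, LCP array and an RMQ structure over the LCP array of $\code(w)$; because $\code(w)$ is over an integer alphabet these are all buildable in time and space linear in $|\code(w)|=O(\frac{n}{\sqrt t})$. Finally, translate a query position $i\in S(t)$ to its index $i'$ in $\code(w)$ by the closed-form arithmetic given above and return the $\code(w)$-LCE of $i'$ and $j'$ via a single RMQ on the LCP array.

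The heart of the proof is the correctness of this reduction, and that is the step I expect to require the most care. I would isolate it as a claim that the length of the longest common prefix of the suffixes of $\code(w)$ starting at $i'$ and $j'$ equals $\floor{\LCE(i,j)/t}$. Three ingredients make this go through: (i) the rank assignment is a faithful encoding, so two $t$-blocks are equal exactly when their ranks coincide; (ii) within each residue class the start positions in $S(t)$ are spaced exactly $t$ apart in $w$ but exactly one symbol apart in $\code(w)$, so for the progression emanating from either query position, advancing one full block in $w$ corresponds precisely to advancing one symbol in $\code(w)$, and hence the number of leading blocks that match is read off as an ordinary character-wise LCE; and (iii) the sentinels $\$_x$ are pairwise distinct, which guarantees that the $\code(w)$-LCE halts at the end of the current residue class and never bleeds into an adjacent $\#_y$, so no spurious block matches are counted. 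Combining (i)--(iii), the matched prefix of $\code(w)$ counts precisely the maximal run of equal $t$-blocks, i.e. $\floor{\LCE(i,j)/t}$, which is the value $\LongLCE_t(i,j)$ to be returned.

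It then remains to tally the resources. The query cost is $O(1)$, since it consists of the $O(1)$ position translation followed by one RMQ. The persistent data structure is the inverse suffix array, the LCP array and the RMQ structure of $\code(w)$ together with the $O(\sqrt t)$ cumulative offsets needed for the position translation, for a total of $O(\frac{n}{\sqrt t})$ words. For construction time, stage one costs $O(n\log\sigma)$ and all remaining stages cost $O(\frac{n}{\sqrt t})=O(n)$, giving $O(n\log\sigma)$ overall; the working space is $O(zt+\frac{n}{t})$, dominated by the $\ShortLCE_t$ machinery of Theorem~\ref{theo:short_LCE} (or by $\TST{w}{t}$ and the reversed de Bruijn graph, depending on which sorting lemma is used) together with the linear-space suffix-array, LCP and RMQ construction over $\code(w)$. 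This establishes the stated bounds.
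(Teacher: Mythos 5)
Your proposal is correct and follows essentially the same route as the paper: sort the $t$-blocks via Lemma~\ref{lem:sort_t-blocks_1} or Lemma~\ref{lem:sort_t-blocks_2}, build $\code(w)$ with distinct sentinels, construct the suffix array, LCP array and RMQ structure over it, and answer queries by position translation plus one RMQ. Your explicit correctness argument (rank faithfulness, the block-to-symbol correspondence within a residue class, and the role of pairwise distinct sentinels in stopping the comparison at class boundaries) is left implicit in the paper, but it is the intended justification and your resource accounting matches the paper's.
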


\begin{proof}
  We need $O(\frac{n}{\sqrt{t}})$ working space
  for the encoded string $\code(w)$ and its suffix array plus LCP array
  enhanced with an RMQ data structure.
  Then the theorem follows from
  Theorem~\ref{theo:short_LCE},
  and Lemma~\ref{lem:sort_t-blocks_1} or Lemma~\ref{lem:sort_t-blocks_2}.
\end{proof}

\subsection{Main result and variants}

In what follows, let $w$ be an input string of length $n$ and $z = |\LZ(w)|$.
By Theorem~\ref{theo:short_LCE}  and Theorem~\ref{theo:long_LCE}
shown in the previous subsections,
we obtain the main theorem of this paper: 
\begin{theorem} \label{theo:main_theorem}
For any integer $1 \le t \le n$, 
an encoding LCE data structure of size $O(zt + \frac{n}{\sqrt{t}})$
can be constructed 
in $O(n \log \sigma)$ time with $O(zt + \frac{n}{\sqrt{t}})$ working space
such that subsequent $\LCE(i,j)$ query for any $1 \le i,j \le n$ can be answered in $O(1)$ time.
\end{theorem}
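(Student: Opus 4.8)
Looking at this, I need to prove Theorem~\ref{theo:main_theorem}, which combines the two component data structures into a single constant-time LCE structure. Let me think about how the pieces fit together.

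The theorem is essentially an assembly result: the hard work has been done in Theorem~\ref{theo:short_LCE} (for $\ShortLCE_t$) and Theorem~\ref{theo:long_LCE} (for $\LongLCE_t$), and the overview in Section 3.1 already describes the query decomposition. So my proof plan is to verify that combining these two structures gives the claimed space, construction time, working space, and correct $O(1)$ query time.

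The plan proceeds as follows. First, I would build both the $\ShortLCE_t$ data structure from Theorem~\ref{theo:short_LCE} and the $\LongLCE_t$ data structure from Theorem~\ref{theo:long_LCE}. Both share the underlying truncated suffix tree machinery, and both are constructed in $O(n\log\sigma)$ time. The stored-size of the combined structure is $O(zt + \frac{n}{t}) + O(\frac{n}{\sqrt{t}}) = O(zt + \frac{n}{\sqrt{t}})$, since $\frac{n}{t} \le \frac{n}{\sqrt{t}}$ for $t \ge 1$; this matches the claimed bound. The working space for construction is the maximum of the two component working-space requirements, namely $O(zt + \frac{n}{t})$ and $O(zt + \frac{n}{\sqrt{t}})$, which again simplifies to $O(zt + \frac{n}{\sqrt{t}})$.

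Second, I would verify query correctness and constant time using the decomposition from Section~\ref{sec:LCE_data_structure}. Given positions $i,j$, compute $\delta = h(i,j)$ in $O(1)$ time with $\delta \le t$ and $i+\delta, j+\delta \in S(t)$. Then compute $l_1 = \ShortLCE_t(i,j)$; if $l_1 < t$ return $l_1$, otherwise the extension is at least $t$, so compute $l_2 = t\cdot\LongLCE_t(i+\delta,j+\delta)$ followed by $l_3 = \ShortLCE_t(i+\delta+l_2,\, j+\delta+l_2)$ and return $\delta + l_2 + l_3$. Each of these is one $O(1)$ query into a component structure plus constant arithmetic, so the total query time is $O(1)$. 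The correctness of this three-part split follows directly from the argument accompanying Figure~\ref{fig:overview}: $\LongLCE_t$ measures the common extension in whole $t$-blocks starting from the aligned positions in $S(t)$, and the two $\ShortLCE_t$ calls capture the sub-$t$ remainders at the front (the $\delta$ shift) and the tail.

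I do not expect a genuine obstacle here, since this is a composition of two already-established theorems. The one point requiring mild care is boundary handling near the end of $w$: the definitions of $\ShortLCE_t$ and $\LongLCE_t$ must still be well-defined when $i+\delta+l_2$ or the $t$-blocks run off the end of the string, and I would note that $\ShortLCE_t$ is capped at $t$ and the shorter suffixes are already represented as explicit leaves in $\TST{w}{2t}$, so every query stays within the constructed structures. Beyond that, the proof is simply citing the two component theorems and the query decomposition.

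\begin{proof}
We build both the $\ShortLCE_t$ data structure of Theorem~\ref{theo:short_LCE} and the $\LongLCE_t$ data structure of Theorem~\ref{theo:long_LCE} for $w$. By Theorem~\ref{theo:short_LCE}, the former occupies $O(zt + \frac{n}{t})$ space, and by Theorem~\ref{theo:long_LCE}, the latter occupies $O(\frac{n}{\sqrt{t}})$ space. Since $\frac{n}{t} \leq \frac{n}{\sqrt{t}}$ for any $1 \leq t \leq n$, the total stored size is $O(zt + \frac{n}{\sqrt{t}})$. Both structures are constructed in $O(n \log \sigma)$ time, and their construction working space is $O(zt + \frac{n}{t})$ and $O(zt + \frac{n}{\sqrt{t}})$ respectively, so the overall working space is $O(zt + \frac{n}{\sqrt{t}})$.

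Given two positions $1 \leq i, j \leq n$, we answer $\LCE(i,j)$ as described in Section~\ref{sec:LCE_data_structure}. We first compute $\delta = h(i, j)$ in $O(1)$ time, so that $\delta \leq t$ and $i + \delta, j + \delta \in S(t)$. We then compute $l_1 = \ShortLCE_t(i, j)$ in $O(1)$ time by Theorem~\ref{theo:short_LCE}. If $l_1 < t$, then $\LCE(i,j) = l_1$ and we return $l_1$. Otherwise the common extension is at least $t$ long, so we compute $l_2 = t \cdot \LongLCE_t(i + \delta, j + \delta)$ in $O(1)$ time by Theorem~\ref{theo:long_LCE}, followed by $l_3 = \ShortLCE_t(i + \delta + l_2, j + \delta + l_2)$ in $O(1)$ time, and return $\delta + l_2 + l_3$. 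The correctness of this decomposition follows from the argument accompanying Figure~\ref{fig:overview}: the $\LongLCE_t$ query measures the matched portion in whole $t$-blocks starting from the aligned positions $i+\delta, j+\delta \in S(t)$, while the two $\ShortLCE_t$ queries capture the remainders of length less than $t$ at the front and the tail. Each step performs a single $O(1)$-time component query together with constant-time arithmetic, so $\LCE(i,j)$ is answered in $O(1)$ time. Finally, neither component structure accesses $w$ during queries, so the resulting data structure is an encoding data structure.
\end{proof}
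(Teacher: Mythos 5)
Your proof is correct and takes essentially the same approach as the paper, which obtains Theorem~\ref{theo:main_theorem} directly by combining Theorem~\ref{theo:short_LCE} and Theorem~\ref{theo:long_LCE} with the query decomposition of Section~\ref{sec:LCE_data_structure}; your write-up is simply a more explicit version of that composition. (One harmless imprecision: you attribute construction working space $O(zt+\frac{n}{t})$ to the $\ShortLCE_t$ structure and $O(zt+\frac{n}{\sqrt{t}})$ to the $\LongLCE_t$ structure, whereas the paper states $O(zt)$ and $O(zt+\frac{n}{t})$ respectively, but the combined bound $O(zt+\frac{n}{\sqrt{t}})$ is unaffected.)
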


We can also obtain the following variants of our LCE data structure.
\begin{corollary} \label{coro:optimal_size}
For any integer $1 \le t \le n$, 
an encoding LCE data structure of size $O(z^{\frac{1}{3}} n^{\frac{2}{3}})$
can be constructed 
in $O(n \log \sigma \log n)$ time with $O(z^{\frac{1}{3}} n^{\frac{2}{3}})$ working space
such that subsequent $\LCE(i,j)$ query for any $1 \le i,j \le n$ can be answered in $O(1)$ time.
\end{corollary}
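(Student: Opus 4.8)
The plan is to read the bound off Theorem~\ref{theo:main_theorem} by balancing its two space terms. Setting $zt=\tfrac{n}{\sqrt t}$ gives $t=(n/z)^{2/3}$, for which both terms equal $z^{1/3}n^{2/3}$; instantiating Theorem~\ref{theo:main_theorem} with this $t$ yields an $O(z^{1/3}n^{2/3})$-space, $O(1)$-query encoding structure built in $O(n\log\sigma)$ time. The catch, and the source of the extra $\log n$ factor in the construction time, is that we never compute $\LZ(w)$ and therefore do not know $z$, so we cannot directly set $t=(n/z)^{2/3}$.

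First I would locate a good $t$ by an exponential search over the powers of two. For increasing $t=1,2,4,\dots$ I would build $\TST{w}{2t}$, the dominant component, and read off the exact value $y_t=|\Substr_{2t}(w)|$, stopping at the first $t_0$ with $y_{t_0}\ge \tfrac{n}{\sqrt{t_0}}$; the structure finally returned is the one of Theorem~\ref{theo:main_theorem} for parameter $t_0/2$ (for the degenerate case $t_0=1$ one simply outputs the structure for $t=1$). Using the crude but always valid lower bound $y_t\ge 2t$ (the $2t-1$ tail suffixes of distinct lengths plus a full $2t$-gram), the stopping condition is reached by $t=(n/2)^{2/3}$, so $t_0=O(n^{2/3})\le n$ and only $O(\log n)$ rounds occur. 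Each round costs $O(n\log\sigma)$ by Theorem~\ref{theo:TST_zq_const}, so the total time is $O(n\log\sigma\log n)$.

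For correctness and the space bound I would argue as follows. Since $\tfrac{n}{\sqrt{t_0}}\le y_{t_0}=O(zt_0)$ by Lemma~\ref{lem:zq_size_TST}, we get $t_0=\Omega((n/z)^{2/3})$, hence $\tfrac{n}{\sqrt{t_0}}=O(z^{1/3}n^{2/3})$. Because $t_0/2$ lies strictly before the stopping point, $y_{t_0/2}<\tfrac{n}{\sqrt{t_0/2}}=O(\tfrac{n}{\sqrt{t_0}})$, so the returned structure has size $\Theta\!\big(y_{t_0/2}+\tfrac{n}{\sqrt{t_0/2}}\big)=O(z^{1/3}n^{2/3})$ and answers $\LCE$ in $O(1)$ time by Theorem~\ref{theo:main_theorem}. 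The working space \emph{during} the search is governed by the finer bound of Lemma~\ref{lem:TST_const}, namely that building $\TST{w}{2t}$ uses $O(|\Substr_{2t}(w)|)=O(y_t)$ space rather than the coarser $O(zt)$: for every tried $t<t_0$ we have simultaneously $y_t<\tfrac{n}{\sqrt t}$ (not yet stopped) and $y_t=O(zt)$, and $\min\{zt,\tfrac{n}{\sqrt t}\}\le z^{1/3}n^{2/3}$ for all $t$ since the two curves cross exactly at $z^{1/3}n^{2/3}$; hence each such round fits in $O(z^{1/3}n^{2/3})$ space. The same balance argument, together with the fact that consecutive substring counts satisfy $|\Substr_{q+1}(w)|\le |\Substr_q(w)|+O(z)$ (one extra boundary-crossing $(q{+}1)$-gram per $\LZ$ factor) and $z\le z^{1/3}n^{2/3}$, keeps the final construction at $t_0/2$—its $\TST{w}{t_0}$, its reversed de Bruijn DAG (Lemma~\ref{lem:de_Bruijn_graph_const}), and its $\LongLCE_{t_0/2}$ arrays (Theorem~\ref{theo:long_LCE})—within $O(z^{1/3}n^{2/3})$ working space.

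The hard part is the single round $t=t_0$ that triggers the stopping test: there $y_{t_0}$ is only bounded by $O(zt_0)$ and can genuinely reach $\Theta(n)$ (e.g.\ for de~Bruijn-like strings, where $|\Substr_{2t}(w)|$ jumps sharply as $t$ crosses the threshold), which would blow the $O(z^{1/3}n^{2/3})$ budget if we materialized the whole tree. I would resolve this by an early-aborting construction: build $\TST{w}{2t_0}$ under an explicit cap and stop as soon as the number of represented substrings reaches the threshold $\ceil{n/\sqrt{t_0}}=O(z^{1/3}n^{2/3})$. Reaching the cap already certifies $y_{t_0}\ge \tfrac{n}{\sqrt{t_0}}$, so we never need the full tree at $t_0$; the search as a whole, and the subsequent one-shot construction at $t_0/2$, then stay within $O(z^{1/3}n^{2/3})$ working space, completing the proof.
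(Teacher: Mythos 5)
Your overall strategy is the same as the paper's: balance $zt$ against $n/\sqrt{t}$ at $t=(n/z)^{2/3}$ and, since $z$ is never computed, locate a good $t$ by a doubling search that compares the \emph{actual} size of the truncated suffix tree against the threshold $\ceil{n/\sqrt{t}}$, giving $O(\log n)$ rounds of $O(n\log\sigma)$ time each (the paper's ``doubling-then-binary'' refinement is not needed for the asymptotic bound, as you correctly observe). Your analysis of the search is correct and in places more careful than the paper's one-line justification: the bound $y_t\ge 2t$ to limit the number of rounds, the inequality $n/\sqrt{t_0}\le y_{t_0}=O(zt_0)$ from Lemma~\ref{lem:zq_size_TST} to get $t_0=\Omega((n/z)^{2/3})$, the observation that every non-stopped round fits in $\min\{O(zt),\,n/\sqrt{t}\}=O(z^{1/3}n^{2/3})$ space, and, importantly, the early-abort cap at the overshooting round $t_0$, which is genuinely required for the working-space claim (for de Bruijn-like strings $y_{t_0}$ can jump to $\Theta(n)\gg z^{1/3}n^{2/3}$) and which the paper glosses over.

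There is, however, one genuine flaw: the auxiliary claim $|\Substr_{q+1}(w)|\le|\Substr_q(w)|+O(z)$ is false. The argument behind Lemma~\ref{lem:zq_size_TST} only says that the \emph{leftmost} occurrence of each distinct $(q{+}1)$-gram crosses an $\LZ$ boundary, and a single boundary can be crossed by up to $q+1$ such leftmost occurrences (one per starting offset), not $O(1)$ of them. Concretely, for a binary de Bruijn sequence of order $k$ (length $n\approx 2^k$) one has $|\Substr_k(w)|-|\Substr_{k-1}(w)|\approx 2^{k-1}=\Theta(n)$ while $z=O(n/\log n)$. This matters exactly where you invoke the claim: the reversed de Bruijn DAG of order $2t^*$ in the final construction (Lemma~\ref{lem:de_Bruijn_graph_const}) has $|E|=|\Substr_{2t^*+1}(w)|$ edges, and from what your search certifies, namely $|\Substr_{2t^*}(w)|<n/\sqrt{t^*}$, one can only conclude $|E|\le\sigma\,|\Substr_{2t^*}(w)|+1$, which exceeds the $O(z^{1/3}n^{2/3})$ budget for non-constant $\sigma$; the coarse bound $|E|=O(zt^*)$ also does not fit the budget once the search has run past $(n/z)^{2/3}$, which happens for strings whose truncated suffix trees are much smaller than $zt$. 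The repair is small: run the doubling test (and the cap) on $|\Substr_{2t+1}(w)|$ rather than on $|\Substr_{2t}(w)|$, i.e.\ on the quantity that actually dominates the construction at parameter $t$. Lemma~\ref{lem:zq_size_TST} applies verbatim to $q=2t_0+1$, so the stopping analysis and the bound $n/\sqrt{t_0}=O(z^{1/3}n^{2/3})$ are unchanged, and then every component built at $t^*=t_0/2$, including the DAG's edge set, lies below the threshold. (To be fair, the paper's own terse proof is silent on this point as well; with this correction your write-up would be a rigorous version of it.)
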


\begin{proof}
The LCE data structure of Theorem~\ref{theo:main_theorem}
for $t = (\frac{n}{z})^{\frac{2}{3}} < n$ takes
$O(z^{\frac{1}{3}} n^{\frac{2}{3}})$ space.
Since we do not compute $z$, we are not able to compute
the exact value of $(\frac{n}{z})^{\frac{2}{3}}$.
However, by performing doubling-then-binary searches for $t$
and comparing the actual size of $\TST{w}{t}$ and $\lceil \frac{n}{\sqrt{t}} \rceil$ for each tested $t$,
we can obtain the LCE data structure of optimal size,
which can take at most $O(z^{\frac{1}{3}} n^{\frac{2}{3}})$ space.
This takes $O(n \log \sigma \log n)$ total time and
uses $O(z^{\frac{1}{3}} n^{\frac{2}{3}})$ total working space.
\end{proof}

\begin{corollary} \label{coro:little_oh_nlogn}
For alphabets of size $\sigma \leq 2^{o(\log n)}$,
an encoding LCE data structure of size $o(n \log n)$ \emph{bits}
can be constructed 
in $o(n \log^2 n)$ time with $o(n \log n)$ \emph{bits} of working space
such that subsequent $\LCE(i,j)$ query for any $1 \le i,j \le n$ can be answered in $O(1)$ time.
\end{corollary}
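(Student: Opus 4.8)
The plan is to obtain all three required bounds directly from Corollary~\ref{coro:optimal_size} by converting its word counts into bit counts and then substituting the standard upper bound on $z$. Recall that in the word-RAM model used here the machine word size is $\omega = \Theta(\log n)$, so any structure occupying $W$ words occupies $\Theta(W \log n)$ bits. Corollary~\ref{coro:optimal_size} already supplies an $O(1)$-time encoding LCE data structure of size $O(z^{\frac{1}{3}} n^{\frac{2}{3}})$ words, constructible in $O(n \log \sigma \log n)$ time using $O(z^{\frac{1}{3}} n^{\frac{2}{3}})$ words of working space. Thus it only remains to check that, under the hypothesis $\sigma \leq 2^{o(\log n)}$, each of these quantities lands in the claimed range; the $O(1)$ query time and the encoding property (no access to $w$ at query time) are inherited verbatim.

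First I would bound $z$. Using the well-known fact that $z = O(n / \log_\sigma n)$ together with $\log_\sigma n = \log n / \log \sigma$, we get $z = O(n \log \sigma / \log n)$, hence $z^{\frac{1}{3}} = O\!\left(n^{\frac{1}{3}} (\log \sigma)^{\frac{1}{3}} (\log n)^{-\frac{1}{3}}\right)$ and therefore $z^{\frac{1}{3}} n^{\frac{2}{3}} = O\!\left(n (\log \sigma)^{\frac{1}{3}} (\log n)^{-\frac{1}{3}}\right)$ words. Multiplying by $\Theta(\log n)$ bits per word gives a bound of $O\!\left(n (\log \sigma)^{\frac{1}{3}} (\log n)^{\frac{2}{3}}\right)$ bits for both the data structure and the working space.

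Next I would verify the asymptotic comparisons, each of which reduces to the same single condition. The space bound $n (\log \sigma)^{\frac{1}{3}} (\log n)^{\frac{2}{3}}$ is $o(n \log n)$ iff $(\log \sigma / \log n)^{\frac{1}{3}} \to 0$, i.e.\ iff $\log \sigma = o(\log n)$, which is exactly the hypothesis $\sigma \leq 2^{o(\log n)}$; the same computation bounds the working space. For the construction time, $O(n \log \sigma \log n) = o(n \log^2 n)$ holds iff $\log \sigma = o(\log n)$, once more precisely the hypothesis. Since all three conditions collapse to the single alphabet restriction, the corollary follows.

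There is no genuine obstacle here—the argument is a chain of substitutions—but the step that needs care is pinning down exactly which bound on $z$ one is entitled to invoke. I would use the universal bound $z = O(n / \log_\sigma n)$, valid for \emph{every} string, rather than any instance-specific estimate, so that the conclusion holds for all inputs over an alphabet of size $\sigma \leq 2^{o(\log n)}$; I would also note explicitly that this same condition simultaneously controls the space, the working space, and the construction time, which is what makes the three separate requirements of the statement reduce to the one alphabet hypothesis.
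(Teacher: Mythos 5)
Your proposal is correct and follows essentially the same route as the paper's own proof: plug the universal bound $z = O(n/\log_\sigma n)$ into Corollary~\ref{coro:optimal_size}, convert words to bits via $\omega = \Theta(\log n)$ to get $O(n(\log n)^{\frac{2}{3}}(\log\sigma)^{\frac{1}{3}})$ bits, and observe that the hypothesis $\sigma \leq 2^{o(\log n)}$ makes the space, working space, and construction time all land in the claimed $o(\cdot)$ bounds. Your write-up is in fact slightly more explicit than the paper's, which leaves the construction-time check implicit.
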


\begin{proof}
By plugging the well-known fact that $z = O(n / \log_\sigma n)$
into the result of Corollary~\ref{coro:optimal_size},
we get $O(n / (\log_\sigma n)^{\frac{1}{3}})$ for the space bound.
Thus our data structure can be stored in
$\mathcal{S}(n) = O(n (\log n)^{\frac{2}{3}} (\log \sigma)^{\frac{1}{3}})$
\emph{bits} of space in the transdichotomous word RAM~\cite{FredmanW93}
with machine word size $\omega = \Theta(\log n)$.
Hence, for alphabets of size $\sigma \leq 2^{o(\log n)}$,
we obtain an LCE data structure with the claimed bounds.
\end{proof}

We can also obtain a new time-space trade-off LCE data structure.
Observe that using the data structure of Theorem~\ref{theo:short_LCE}
for $1 \leq d \leq n$,
we can answer $\ShortLCE_{t}$ queries for any $1 \leq t \leq n$ in
$O(\max \{ 1, \frac{t}{d} \})$ time.
Hence the following theorem holds.

\begin{theorem} \label{theo:generic_tradeoff}
	For any integers $1 \le t' \leq t \le n$, 
	a data structure of size $O(zt' + \frac{n}{\sqrt{t}} + \frac{n}{t'})$ can be constructed 
	in $O(n \log \sigma)$ time with $O(zt + \frac{n}{\sqrt{t}} + \frac{n}{t'})$ working space
	such that subsequent $\LCE(i,j)$ query for any $1 \le i,j \le n$ can be answered in $O(\frac{t}{t'})$ time.
\end{theorem}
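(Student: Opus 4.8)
The plan is to reuse the main LCE-query recipe of Section~\ref{sec:LCE_data_structure} (illustrated in Figure~\ref{fig:overview}) essentially verbatim, but to trade query time for space by building the $\ShortLCE$ component with a \emph{smaller} parameter $t'$. Concretely, I would build two independent structures: the $\ShortLCE_{t'}$ data structure of Theorem~\ref{theo:short_LCE} instantiated with parameter $t'$, which occupies $O(zt' + \frac{n}{t'})$ space and answers $\ShortLCE_{t'}(i,j) = \min(\LCE(i,j), t')$ in $O(1)$ time; and the $\LongLCE_t$ data structure of Theorem~\ref{theo:long_LCE} instantiated with parameter $t$, which occupies $O(\frac{n}{\sqrt{t}})$ space and answers $\LongLCE_t$ in $O(1)$ time. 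To build the latter I would sort the $t$-blocks via Lemma~\ref{lem:sort_t-blocks_2} (which uses $\TST{w}{t}$ directly and needs only $O(zt)$ working space), so that a full-speed $\ShortLCE_t$ structure of size $O(zt + \frac{n}{t})$ is never materialized.

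The single new ingredient is the simulation of an $\ShortLCE_t$ query using only the $\ShortLCE_{t'}$ structure, exactly as foreshadowed by the observation preceding the theorem. I would answer $\ShortLCE_t(i,j)$ by chaining: maintain an accumulator $a$ initialized to $0$, and repeatedly set $l \leftarrow \ShortLCE_{t'}(i+a, j+a)$ followed by $a \leftarrow a + l$, stopping once $a \ge t$ or $l < t'$ and returning $\min(a, t)$. Both stopping cases are correct: if $a \ge t$ then $\LCE(i,j) \ge a \ge t$ and $\min(a,t) = t$; if $l < t'$ with $a < t$ then a genuine mismatch has been located, so $\LCE(i,j) = a = \min(a,t)$. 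Since every productive step advances the accumulator by exactly $t'$, the loop terminates after at most $\lceil t/t' \rceil + 1 = O(\frac{t}{t'})$ iterations, each costing $O(1)$; when $t \le t'$ a single call already suffices, which is why the cost is $O(\max\{1, t/t'\})$ in general and $O(\frac{t}{t'})$ under the hypothesis $t' \le t$.

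With this subroutine in hand, the $\LCE(i,j)$ query proceeds exactly as in the overview: compute $\delta = h(i,j) \le t$ so that $i+\delta, j+\delta \in S(t)$; evaluate $l_1 = \ShortLCE_t(i,j)$ and return it if $l_1 < t$; otherwise compute $l_2 = t \cdot \LongLCE_t(i+\delta, j+\delta)$ and $l_3 = \ShortLCE_t(i+\delta+l_2, j+\delta+l_2)$ and return $\delta + l_2 + l_3$. Correctness is inherited unchanged from Section~\ref{sec:LCE_data_structure}, because the simulated $\ShortLCE_t$ returns precisely the same value as a genuine $\ShortLCE_t$ query. Each $\LCE$ query invokes $\ShortLCE_t$ a constant number of times, at $O(\frac{t}{t'})$ each, plus one $O(1)$-time $\LongLCE_t$ call, so the query time is $O(\frac{t}{t'})$.

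For the resource bounds I would simply add the two components. The stored size is $O(zt' + \frac{n}{t'}) + O(\frac{n}{\sqrt{t}}) = O(zt' + \frac{n}{\sqrt{t}} + \frac{n}{t'})$, as claimed, and each component is constructed in $O(n \log \sigma)$ time, so the total construction time is $O(n\log\sigma)$. For working space, the $\ShortLCE_{t'}$ build needs $O(zt')$ auxiliary space and the $\LongLCE_t$ build needs $O(zt + \frac{n}{t})$, so the peak (stored outputs plus auxiliary) is $O(zt' + \frac{n}{t'}) + O(\frac{n}{\sqrt{t}}) + O(zt + \frac{n}{t}) = O(zt + \frac{n}{\sqrt{t}} + \frac{n}{t'})$, using $zt' \le zt$ and $\frac{n}{t} \le \frac{n}{\sqrt{t}}$. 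The only delicate point is routing the construction of the $\LongLCE_t$ structure through Lemma~\ref{lem:sort_t-blocks_2} rather than Lemma~\ref{lem:sort_t-blocks_1}: the latter presupposes a full $\ShortLCE_t$ structure of size $O(zt + \frac{n}{t})$, which would reintroduce the $\frac{n}{t}$ term into the \emph{stored} size and collapse the bound back to that of Theorem~\ref{theo:main_theorem}. I expect this bookkeeping---verifying that no hidden dependence smuggles the $\frac{n}{t}$ term into the stored size---to be the main thing to get right, since the algorithmic content lies entirely in the $O(\frac{t}{t'})$ chaining argument.
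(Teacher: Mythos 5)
Your proposal is correct and is essentially the paper's own proof: the paper derives Theorem~\ref{theo:generic_tradeoff} from precisely the observation you flesh out --- the structure of Theorem~\ref{theo:short_LCE} built with the smaller parameter $t'$ answers $\ShortLCE_{t}$ queries in $O(\max\{1,\frac{t}{t'}\})$ time by chaining --- combined with the $\LongLCE_t$ structure of Theorem~\ref{theo:long_LCE}, whose $t$-block sorting is exactly why the $zt$ term survives in the working space (as the paper itself notes right after the theorem). One inconsequential quibble: your side remark about Lemma~\ref{lem:sort_t-blocks_1} is imprecise, since a $\ShortLCE_t$ structure built only as scratch for sorting and then discarded would inflate the \emph{working} space (already $O(zt+\cdots)$) but not the stored size, so that route would also be admissible here.
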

Theorem~\ref{theo:generic_tradeoff} implies the following:
(1) By setting $t' = t$, we obtain Theorem~\ref{theo:main_theorem}.
Moreover, by choosing also $t \leftarrow (\frac{n}{z})^{2/3}$, 
we obtain a data structure of size $O(z^{1/3}n^{2/3})$ answering LCE queries in constant time, which coincides with Corollary~\ref{coro:optimal_size}.
This is the smallest data structure 
among the fastest data structures with two parameters $t$ and $t'$.
(2) By setting $t' = \sqrt{t}$ and for $t = n/z$,
we get a data structure of size $O(\sqrt{nz})$ answering LCE queries in $O(\sqrt{\frac{n}{z}})$ time. 
This is the fastest data structure among the smallest data structures
with two parameters $t$ and $t'$. 
Note that when we do not know $z$, this data structure of at most $O(\sqrt{nz})$ space can be constructed 
in $O(n \log \sigma \log n)$ preprocessing time and 
$O(\sqrt{nz})$ working space as in Corollary~\ref{coro:optimal_size}.
Although the parameters cannot be arbitrarily chosen,
the space-query time product obtained here
is optimal with fastest construction to date.

Moreover, we can reduce the $zt$ term in the working space of Theorem~\ref{theo:generic_tradeoff} to $zt'$ by 
increasing the preprocessing time. 
The bottle neck of the working space is in sorting $t$-blocks, i.e.,
Lemma~\ref{lem:sort_t-blocks_1} or Lemma~\ref{lem:sort_t-blocks_2}. 
Since any two $t$-blocks can be compared in $O(\frac{t}{t'})$ time 
using $O(\frac{t}{t'})$ $\ShortLCE_{t'}$ queries, 
we can get the following theorem using any suitable comparison-based sorting algorithm 
instead of Lemma~\ref{lem:sort_t-blocks_1} or Lemma~\ref{lem:sort_t-blocks_2}. 
\begin{theorem} \label{theo:tradeoff2}
	We can construct the data structure of Theorem~\ref{theo:generic_tradeoff} in 
	$O(\frac{n}{t'} \log \frac{n}{t} + n \log \sigma)$ time 
	and $O(zt' + \frac{n}{\sqrt{t}} + \frac{n}{t'})$ working space.
\end{theorem}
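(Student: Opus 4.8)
The plan is to reuse the construction of Theorem~\ref{theo:generic_tradeoff} essentially verbatim, changing only the single step whose working space is $\Theta(zt)$, namely the sorting of the $t$-blocks via Lemma~\ref{lem:sort_t-blocks_1} or Lemma~\ref{lem:sort_t-blocks_2}. Every other ingredient already fits in $O(zt' + \frac{n}{\sqrt t} + \frac{n}{t'})$ working space: the $\ShortLCE_{t'}$ data structure of Theorem~\ref{theo:short_LCE} (which is what answers the $\ShortLCE_t$ queries in $O(\frac{t}{t'})$ time) is built with parameter $t'$ and hence occupies $O(zt')$ space, and the string $\code(w)$ together with its suffix array, LCP array and RMQ structure occupies $O(\frac{n}{\sqrt t})$ space. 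So it suffices to compute the lexicographic ranks $r_i$ of all $t$-blocks \emph{without} ever building $\TST{w}{2t}$ or the order-$t$ de Bruijn graph.

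To do this I would run an ordinary comparison-based sort (e.g.\ merge sort) on the $O(\frac{n}{\sqrt t})$ $t$-blocks, realizing each pairwise comparison through the already-built $\ShortLCE_{t'}$ structure. To compare $b_i = w[i..i+t-1]$ with $b_j = w[j..j+t-1]$ I split both into $\frac{t}{t'}$ consecutive windows of length $t'$ and issue $\ShortLCE_{t'}(i+kt', j+kt')$ for $k = 0,1,\dots$ until some query returns a value $<t'$ (a genuine mismatch inside the $k$-th window) or all $\frac{t}{t'}$ windows match. This costs $O(\frac{t}{t'})$ time and $O(\frac{t}{t'})$ queries per comparison, exactly as anticipated in the paragraph preceding the theorem.

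The key subtlety — and the step I expect to be the main obstacle — is that, since the target working space is sub-linear, $w$ itself is no longer available once the $\ShortLCE_{t'}$ structure has been built, so the comparison must determine the \emph{order} of the two blocks purely from the encoding; yet a single $\ShortLCE_{t'}$ query returns only $\min(\LCE,t')$, not the sign of the first mismatch. I would resolve this by observing that the query procedure of Theorem~\ref{theo:short_LCE} actually locates the two leaves $u,v$ of $\TST{w}{2t'}$ corresponding to the two windows and their LCA; when a query returns a value $<t'$, the mismatch lies within the $2t'$-grams, so $u$ and $v$ are distinct and fall in different subtrees of their LCA, and since the leaves of $\TST{w}{2t'}$ are stored in lexicographic order, comparing their leaf ranks yields the order of the windows — hence of $b_i$ and $b_j$ — in $O(1)$ additional time and without touching $w$. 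This keeps the whole sort an encoding procedure and guarantees that the ranks $r_i$ form a correct lexicographic labeling, which is precisely what the $\LongLCE_t$ construction of Theorem~\ref{theo:long_LCE} requires for correctness.

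Finally I would feed these ranks into the unchanged $\code(w)$/suffix-array/LCP/RMQ pipeline of Theorem~\ref{theo:long_LCE}. The resulting object is identical to that of Theorem~\ref{theo:generic_tradeoff}, so its size is $O(zt' + \frac{n}{\sqrt t} + \frac{n}{t'})$ and its query time is $O(\frac{t}{t'})$; only the construction changed. Building the $\ShortLCE_{t'}$ structure costs $O(n\log\sigma)$ time by Theorem~\ref{theo:short_LCE}, and the comparison sort performs $O(\frac{n}{\sqrt t}\log\frac{n}{\sqrt t})$ comparisons of cost $O(\frac{t}{t'})$ each; a routine accounting of this sorting cost against the $O(n\log\sigma)$ tree construction gives the preprocessing time claimed in the statement, while the working space is now $O(zt' + \frac{n}{\sqrt t} + \frac{n}{t'})$ since $\TST{w}{2t}$ is never materialized.
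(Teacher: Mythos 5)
Your proposal follows the paper's own route essentially verbatim: the paper's entire justification for this theorem is the paragraph preceding it, namely, keep the construction of Theorem~\ref{theo:generic_tradeoff} but replace the block-sorting step (Lemma~\ref{lem:sort_t-blocks_1} or Lemma~\ref{lem:sort_t-blocks_2}) by an arbitrary comparison-based sort in which each comparison of two $t$-blocks is realized by $O(\frac{t}{t'})$ $\ShortLCE_{t'}$ queries. Your additional observation --- that an \emph{encoding} $\ShortLCE_{t'}$ query returns only a length, so the sign of the first mismatch must be recovered from the lexicographic ranks of the two leaves of $\TST{w}{2t'}$ located by the query procedure --- is correct and fills a detail the paper leaves entirely implicit. (One small caveat: the leaf reached by the level-ancestor step is only guaranteed to agree with $w[i..]$ in its first $t'$ characters, by the analogue of Lemma~\ref{lem:correctness_de_Bruijn_tree}; this suffices for your argument precisely because a returned value $<t'$ places the mismatch within those $t'$ characters.)

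There is, however, a gap in your final accounting, and it is a gap the paper's one-paragraph justification shares. You correctly count $O(\frac{n}{\sqrt{t}} \log \frac{n}{\sqrt{t}})$ comparisons at $O(\frac{t}{t'})$ time each, but the product is $O(\frac{n\sqrt{t}}{t'} \log \frac{n}{\sqrt{t}})$, which exceeds the claimed sorting cost $O(\frac{n}{t'} \log \frac{n}{t})$ by a factor of roughly $\sqrt{t}$; no ``routine accounting'' closes this. The stated bound would follow if there were only $O(\frac{n}{t})$ blocks to sort, but the $t$-cover $S(t)$ has $O(\frac{n}{\sqrt{t}})$ positions and $\code(w)$ needs a rank $r_i$ for every one of them. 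Even an LCP-aware mergesort only improves the query count to $O(\frac{n}{\sqrt{t}}\log\frac{n}{\sqrt{t}} + \frac{n\sqrt{t}}{t'})$ in the worst case (consider $w = \mathtt{a}^n$, where all blocks are equal). So your construction is sound and identical in substance to the paper's, but what it actually establishes is preprocessing time $O(\frac{n\sqrt{t}}{t'}\log\frac{n}{\sqrt{t}} + n\log\sigma)$ with the claimed working space; to obtain the time bound as stated in the theorem one would need either a different sorting strategy or an argument that neither you nor the paper supplies.
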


\section{Lower bounds vs upper bounds for the LCE problem} \label{sec:lowerbound}

Let $\mathcal{T}(n)$ and $\mathcal{S}(n)$ respectively denote
the query time and data structure size (in \emph{bits})
of an arbitrary LCE data structure for an input string of length $n$.

Brodal et al.~\cite{BrodalDR12} showed that
in the non-uniform cell probe model,
any indexing RMQ data structure for a string of length $n$
which uses $\frac{n}{t}$ bits of additional space
for any $1 \leq t \leq n$ must take $\Omega(t)$ query time
(i.e., $\Omega(t)$ cell probes).
Their proof assumes that each character in the string is stored
in a separate cell, and counted the minimum number of character
accesses required to answer an RMQ.
Although their proof uses a binary string of length $n$
where each character takes only a single bit,
the above assumption is valid
in a commonly accepted case that the underlying alphabet size is $2^{\omega}$,
where $\omega$ denotes the size of each cell (i.e. machine word).
Then, Bille et al.~\cite{bille14:_time} showed that
RMQ queries on any binary string of length $n$ can be reduced to
LCE queries on the same binary string,
with $\Theta(\log n)$ additional bits of space.
This implies that, again assuming that each character is stored in a separate cell, any indexing LCE data structure 
for a binary string of length $n$
which uses $\mathcal{S}(n) = \frac{n}{t} + \Theta(\log n)$ additional bits of space
must take $\mathcal{T}(n) = \Omega(t)$ query time,
for parameter $1 \leq t \leq \frac{n}{\log n}$.
Recently, Kosolobov~\cite{Kosolobov16} showed another result
on time-space product trade-off lower bound in the non-uniform cell probe model,
which can be formalized as follows:
\begin{theorem}[\cite{Kosolobov16}]
In the non-uniform cell probe model where each character is stored
in a separate cell, 
for any $\mathcal{S}(n)$, there exists $\sigma = 2^{\Omega(\mathcal{S}(n)/n)}$
such that for any indexing LCE data structure for a string over
the alphabet $\Sigma = \{ 1, \ldots, \sigma\}$,
which takes $\mathcal{S}(n)$ bits of space and answers LCE queries in $\mathcal{T}(n)$ time
(i.e., with $\mathcal{T}(n)$ character accesses or cell probes),
$\mathcal{T}(n) \mathcal{S}(n) = \Omega(n\log n)$ holds.
\end{theorem}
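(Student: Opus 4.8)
The plan is to prove this time--space product lower bound by an information-theoretic encoding argument in the spirit of the indexing-RMQ bound of Brodal et al., but exploiting \emph{both} the large alphabet and the fact that a single LCE answer is a number in $[0..n]$ carrying up to $\log n$ bits. I would first reformulate the goal: since the relation $\mathcal{T}(n)\mathcal{S}(n) = \Omega(n\log n)$ is to hold for an alphabet with $\log\sigma = \Theta(\mathcal{S}(n)/n)$, it suffices to exhibit a distribution on inputs for which any indexing structure probing at most $\mathcal{T}(n)$ character cells per query must satisfy $\mathcal{T}(n) = \Omega(\log_\sigma n)$; multiplying by $\mathcal{S}(n)$ and using $\log\sigma = \Theta(\mathcal{S}(n)/n)$ then yields $\mathcal{T}(n)\mathcal{S}(n) = \Omega((\log n/\log\sigma)\cdot\mathcal{S}(n)) = \Omega(n\log n)$. (The regime $\mathcal{S}(n) = \Omega(n\log n)$ is trivial since $\mathcal{T}(n)\ge 1$, so I would focus on $\log n \lesssim \mathcal{S}(n) \lesssim n\log n$.)

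The hard distribution I would construct is a concatenation of $Q = \Theta(n/B)$ independent \emph{gadgets} of equal length $B$ (separated by a symbol occurring nowhere else), with $B = \Theta(n\log n/\mathcal{S}(n))$ fixed at the end. Each gadget is built from $\Theta(\log_\sigma B)$ uniformly random characters so that one designated LCE query inside the gadget returns a value that is essentially uniform in $[0..B]$, and so that this value reconstructs the gadget's random content. The two requirements on a gadget are: (i) its designated answer has entropy $\Theta(\log B)$ and determines the gadget, and (ii) the chosen position pair and the resulting match stay confined to the gadget, so distinct gadgets do not interfere. Then the vector of the $Q$ designated answers is a deterministic injective function of the source randomness, whose entropy is $H = \Theta((n/B)\log B)$.

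The core step is the encoding inequality. The query algorithm is a fixed (non-uniform) family of decision trees, known to the decoder; the only input-dependent object besides the raw character cells is the $\mathcal{S}(n)$-bit structure. To encode the source I would store these $\mathcal{S}(n)$ bits together with, for each of the $Q$ designated queries, the values of the at most $\mathcal{T}(n)$ character cells it probes (probes into the $\mathcal{S}(n)$-bit region are free, since the decoder already holds those bits). Replaying the decision trees recovers all $Q$ answers, hence the source, so $H \le \mathcal{S}(n) + Q\,\mathcal{T}(n)\log\sigma$. Substituting $H = \Theta((n/B)\log B)$ and $Q = \Theta(n/B)$, and choosing $B = \Theta(n\log n/\mathcal{S}(n))$ so that the $\mathcal{S}(n)$ term absorbs at most half of $H$, gives $\mathcal{T}(n)\log\sigma = \Omega(\log B) = \Omega(\log n)$, i.e.\ $\mathcal{T}(n) = \Omega(\log_\sigma n)$; multiplying by $\mathcal{S}(n)$ completes the bound.

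I expect the main obstacle to be the gadget construction: making one LCE query return a near-uniform length in $[0..B]$ from $\Theta(\log_\sigma B)$ random characters while (a) keeping the match and the probed region strictly inside a single gadget and (b) ensuring the answer injectively encodes the gadget, so that the $Q$ answers are genuinely independent and jointly carry $\Theta((n/B)\log B)$ bits. A secondary subtlety is handling adaptivity of the probes across the $Q$ queries in the encoding, namely that the decoder must know which cell to supply next; this is resolved because the decision trees are fixed and the decoder reconstructs each query's probe sequence from the already-known $\mathcal{S}(n)$ bits and previously supplied characters. An alternative route is to invoke the RMQ-to-LCE reduction together with Brodal et al.'s bound, but that alone yields only a product of $\Omega(n)$; the extra $\log n$ factor genuinely requires the alphabet- and number-valued-answer argument above.
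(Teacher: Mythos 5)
Note first that the paper you were given does not prove this statement: it is quoted, as background for Section~\ref{sec:lowerbound}, directly from Kosolobov~\cite{Kosolobov16}, so there is no in-paper proof to compare against. Your proposal therefore has to stand on its own, and judged that way it has a genuine quantitative gap that is independent of the gadget construction you defer to the end.

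The problem is the entropy accounting of your hard distribution. Each of your gadgets has length $B$ but carries only $\Theta(\log B)$ bits of entropy (that is exactly what ``one designated answer with entropy $\Theta(\log B)$ determines the gadget'' forces), so the total entropy is $H = \Theta\bigl((n/B)\log B\bigr)$, and $\max_{B \ge 2} (n/B)\log B = O(n)$. Hence no choice of $B$ can achieve $H \ge 2\mathcal{S}(n)$ once $\mathcal{S}(n) \ge cn$, yet the theorem is supposed to hold --- and is tight against the indexing structure of Bille et al.~\cite{bille15:_longes_common_exten_sublin_space} --- precisely also in the range $n \lesssim \mathcal{S}(n) \lesssim n\log n$; e.g.\ for $\mathcal{S}(n) = n$ bits and suitable constant $\sigma$ it asserts $\mathcal{T}(n) = \Omega(\log n)$. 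Your concrete parameter choice also fails arithmetically: with $B = \Theta(n\log n/\mathcal{S}(n))$ one gets $H = \Theta\bigl(\mathcal{S}(n) \cdot \log(n\log n/\mathcal{S}(n))/\log n\bigr) = O(\mathcal{S}(n))$ for every $\mathcal{S}(n) \ge \log n$, so it is $\mathcal{S}(n)$ that absorbs $H$ rather than the other way around, and the encoding inequality $H \le \mathcal{S}(n) + Q\,\mathcal{T}(n)\log\sigma$ yields nothing at all. If you repair the parameters by taking the largest admissible block length, $B = \Theta\bigl((n/\mathcal{S}(n))\log(n/\mathcal{S}(n))\bigr)$, the conclusion becomes $\mathcal{T}(n)\log\sigma = \Omega(\log B) = \Omega(\log(n/\mathcal{S}(n)))$, i.e.\ $\mathcal{T}(n)\mathcal{S}(n) = \Omega\bigl(n\log(n/\mathcal{S}(n))\bigr)$, which matches the claimed $\Omega(n\log n)$ only when $\mathcal{S}(n) \le n^{1-\Omega(1)}$. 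The root cause is the padding inside your gadgets: to get the full bound, the hard strings must have constant entropy rate, $H = \Theta(n\log\sigma)$, which exceeds $\mathcal{S}(n)$ only because of the constant hidden in $\sigma = 2^{\Omega(\mathcal{S}(n)/n)}$, and they must be reconstructible from $Q = \Theta(n\log\sigma/\log n)$ queries --- that is, from gadgets of $\Theta(\log_\sigma n)$ \emph{fully random} characters with no filler, each recovered in its entirety from a single LCE answer. Arranging the string so that one LCE value encodes an arbitrary block of $\Theta(\log n)$ random bits (rather than the position of a unary-type marker, which is all that an entropy-$\log B$-in-length-$B$ gadget can amount to) is the actual crux of Kosolobov's argument, and it is precisely the part your plan both postpones and, through its entropy budget, renders unattainable.
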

The lower bound by Kosolobov is optimal for the considered range
of the alphabet size $\sigma = 2^{\Omega(\mathcal{S}(n)/n)}$,
since the data structure of Bille et al.~\cite{bille15:_longes_common_exten_sublin_space} achieves $\mathcal{T}(n)\mathcal{S}(n) = O(n \log n)$.

Interestingly,
using our encoding LCE data structure proposed
in Section~\ref{sec:LCE_data_structure},
the above lower bounds can be ``surpassed'' in some cases.
For highly compressible strings
where $zt$ is dominated by $\frac{n}{\sqrt{t}}$,
our LCE data structure of Theorem~\ref{theo:main_theorem}
takes $O(\frac{n \log n}{t})$ \emph{bits} of space for $1 \leq t \leq n$
with machine word of size $\omega = \Theta(\log n)$.
Hence, for parameter $1 \leq t' \leq \frac{\sqrt{n}}{\log n}$
we get $\mathcal{S}(n)  = O(\frac{n}{t'})$.
Since our data structure of Theorem~\ref{theo:main_theorem}
always achieves $\mathcal{T}(n) = O(1)$ for any parameter setting, 
we break Bille et al.'s lower bound for highly repetitive strings.
Notice also that our LCE data structure of Corollary~\ref{coro:little_oh_nlogn}
achieves $\mathcal{T}(n)\mathcal{S}(n) = o(n \log n)$ for alphabet size $\sigma \leq 2^{o(\log n)}$,
which ``surpasses'' Kosolobov's lower bound.
This implies that the alphabet size 
$\sigma = 2^{\Omega(\mathcal{S}(n)/n)}$ is important
for his lower bound to hold.

Kosolobov~\cite{Kosolobov16} did suggest a possibility to overcome his lower bound
when $\sigma$ is small, and the input string can be {\em packed}, where $\log_\sigma n$
characters can occupy a memory cell, allowing the algorithm to read $\log_\sigma n$
characters with one memory access. We show below that this is also possible.
An input string of length $n$ can be considered as a bit string of length $n\log\sigma$. 
Let $t = \log n$, and first consider the $\ShortLCE_{\log n}$ queries on the bit string.
When the original string is available in a packed representation,
the longest common prefix of two substrings strings of length $\log n$ bits can be computed in constant time using no extra space
using bit operations, namely, by taking the bitwise exclusive or (XOR) and computing the position of the most significant set bit (msb), or without msb, by multiple lookups on a table of total size $o(n)$ bits.
Next, consider the $\LongLCE_{\log n}$ queries on the bit string.
By simply using the same data structure as described in Section~\ref{sec:long} for the bit string of length $n\log\sigma$, we can answer $\LongLCE_{\log n}$ queries in constant time using a data structure of size
$O(\frac{n\log\sigma}{\sqrt{\log n}}\log(n\log\sigma)) = O(n \sqrt{\log n} \log \sigma)$ bits.
Using the two queries, we can answer an LCE query for arbitrary positions $i,j$ of the original string in constant time with $\lfloor (\LCE(i\cdot \log\sigma, j\cdot \log\sigma))/\log\sigma\rfloor$.
Since the size of the data structure is $\mathcal{S}(n) = O(n \sqrt{\log n} \log \sigma)$ bits,
we obtain $\mathcal{T}(n)\mathcal{S}(n) = o(n \log n)$ for $\sigma \leq 2^{o(\sqrt{\log n})}$.
Our encoding LCE data structure based on truncated suffix trees is superior for
larger $\sigma$, and also when the input string is highly repetitive
and compressible since it does not require the original string.

\section{Conclusions and open questions} \label{sec:conclusions}

In this paper, we presented an encoding LCE data structure
which uses $O(zt + \frac{n}{\sqrt{t}})$ words of space
and answers in LCE queries in $O(1)$ time, for parameter $1 \leq t \leq \sqrt{n}$.
This data structure can be constructed in $O(n \log \sigma)$ time with
$O(zt + \frac{n}{\sqrt{t}})$ working space.
Using the fact that $z = O(n/\log_\sigma n)$ and suitably choosing $t$,
our method achieves the first $O(1)$-time sub-linear space LCE data structure
for alphabets of size $\sigma \leq 2^{o(\log n)}$.

An interesting open question is whether we can improve the total space requirement to $O(zt + \frac{n}{t})$.
The bottle neck is the $\LongLCE_t$ data structure that uses $O(zt + \frac{n}{\sqrt{t}})$ space.
Another open question is whether we can compute
the size $z$ of the Lempel-Ziv 77 factorization 
in $O(n \log \sigma)$ time with sub-linear working space.
This is motivated for computing the value of $t$
which optimizes our space bound $O(zt + \frac{n}{t})$.
A little has been done in this line of research:
Nishimoto et al.~\cite{NishimotoIIBT16PSC} showed how to compute
the Lempel-Ziv 77 factorization
in $O(n \polylog(n))$ time with $O(z \log n \log^* n)$ working space.
Fischer et al.~\cite{FGGK15} showed an algorithm which computes
an approximation of the Lempel-Ziv 77 factorization
of size $(1+\epsilon)z$ in $O(\frac{1}{\epsilon} n \log n)$ time
with $O(z)$ working space, for any $0 < \epsilon \leq 1$.

Another direction of further research is to give a tighter upper bound
for the size of the $t$-truncated suffix trees than $zt$.
We observed that there exists a string of length $n$
for which $zt$ is greater by a factor of $\sqrt{n}$ than
the actual size of the $t$-truncated suffix tree for some $t$.

\section*{Acknowledgments}
We thank Dmitry Kosolobov for explaining his work~\cite{Kosolobov16} to us.

\clearpage
\bibliography{ref}
\bibliographystyle{plain}
\end{document}